\documentclass[submission,copyright,creativecommons]{eptcs}
\providecommand{\event}{GandALF 2026} % Name of the event you are submitting to

\usepackage{iftex}

\ifpdf
  \usepackage{underscore}         % Only needed if you use pdflatex.
  \usepackage[T1]{fontenc}        % Recommended with pdflatex
\else
  \usepackage{breakurl}           % Not needed if you use pdflatex only.
\fi

\usepackage{amssymb}
\usepackage{amsthm}
\usepackage{stmaryrd}
\usepackage{mathtools}
\usepackage{xspace}
\usepackage{mathpartir}
\usepackage{tikz}
\usepackage{hyperref}
\usepackage{thm-restate}

\title{Capturing Bisimulation-Invariant Exponential-Space Complexity Classes}
\author{Florian Bruse
\institute{TU Munich \\ School of Comp.\, Inform.\ and Techn. \\ Munich, Germany}
\and
David Kronenberger 
\hspace*{1.5cm} Martin Lange
\institute{University of Kassel\\ Theor.\ Comp.\ Sci. / Formal Methods \\ Kassel, Germany}
}
\def\titlerunning{Capturing Bisimulation-Invariant Exponential-Space Complexity Classes}
\def\authorrunning{F. Bruse, D. Kronenberger \& M. Lange}
\begin{document}
\maketitle

\theoremstyle{plain}
\newtheorem{theorem}{Theorem}
\newtheorem{lemma}[theorem]{Lemma}
\newtheorem{proposition}[theorem]{Proposition}

\theoremstyle{definition}
\newtheorem{definition}[theorem]{Definition}
\newtheorem{example}[theorem]{Example}

\renewcommand{\epsilon}{\varepsilon}

\newcommand{\M}[1]{\ensuremath{\mathcal{#1}}}
\newcommand{\Nat}{\ensuremath{\mathbb{N}}}
\newcommand{\trPHFL}{\text{trPHFL}\xspace}
\newcommand{\PHFL}{\text{PHFL}\xspace}
\newcommand{\HFL}{\text{HFL}\xspace}
\newcommand{\phfl}[2]{\PHFL^{#1}_{#2}}
\newcommand{\trphfl}[2]{\trPHFL^{#1}_{#2}}
\newcommand{\hfl}[1]{\HFL^{#1}}
\newcommand{\HOPFP}{\hopfp{}}
\newcommand{\HOLLFP}{\hollfp{}}
\newcommand{\hollfp}[1]{\ensuremath{\textsf{HO}^{#1}\textsf{(LFP)}}\xspace}
\newcommand{\hopfp}[1]{\ensuremath{\textsf{HO}^{#1}\textsf{(PFP)}}\xspace}
\newcommand{\FOPFP}{\ensuremath{\textsf{FO(PFP)}\xspace}}
\newcommand{\FOLFP}{\ensuremath{\textsf{FO(LFP)}\xspace}}
\newcommand{\mucalc}{\ensuremath{\mathcal{L}_\mu}\xspace}
\newcommand{\polymucalc}{\ensuremath{\mucalc^\omega}\xspace}

\newcommand{\bisim}[1]{\ensuremath{\text{#1}{\mathbin{/_{\sim}}}}}

\newcommand{\ExpTime}{\textsf{EXPTIME}\xspace}
\newcommand{\ExpSpace}{\textsf{EXPSPACE}\xspace}
\newcommand{\PTime}{\textsf{P}\xspace}
\newcommand{\PSpace}{\textsf{PSPACE}\xspace}
\newcommand{\NPTime}{\textsf{NP}\xspace}
\newcommand{\NLogSpace}{\textsf{NLOGSPACE}\xspace}

\newcommand{\Transsys}{\ensuremath{\mathcal{T}}}
\newcommand{\States}{\ensuremath{{S}}}
\newcommand{\Prop}{\ensuremath{\mathbf{P}}}
\newcommand{\Actions}{\ensuremath{\mathbf{A}}}
\newcommand{\Label}{\ensuremath{\ell}}
\newcommand{\state}{\ensuremath{s}}
\newcommand{\statealt}{\ensuremath{t}}

\def\newarrow#1{\mathop{{\hbox{\setbox0=\hbox{$\scriptstyle{#1\quad}$}{$%
\mathrel{\mathop{\setbox1=\hbox to
\wd0{\rightarrowfill}\ht1=3pt\dp1=-2pt\box1}\limits^{#1}}%
$}}}}}

\def\newarrowi#1#2{\mathop{{\hbox{\setbox0=\hbox{$\scriptstyle{#1\quad}$}{$%
\mathrel{\mathop{\setbox1=\hbox to
\wd0{\rightarrowfill}\ht1=3pt\dp1=-2pt\box1}\limits^{#1}}_{#2}%
$}}}}}

\newcommand{\Transition}[3]{\ensuremath{#1 \newarrow{#2} #3}}
\newcommand{\Transitioni}[4]{\ensuremath{#1 \newarrowi{#2}{#4} #3}}

\newcommand{\indtype}{\odot}
\newcommand{\ORDHO}{\mathit{order}}
\newcommand{\ordho}[1]{\ORDHFL(#1)}
\newcommand{\hovars}{\ensuremath{\mathbf{V}}}
\newcommand{\pfp}{\ensuremath{\mathit{pfp}}}
\newcommand{\PFP}{\ensuremath{\mathit{PFP}}}

\newcommand{\grtype}{\bullet}
\newcommand{\varpos}{\ensuremath{+}}
\newcommand{\varneg}{\ensuremath{-}}
\newcommand{\varboth}{\ensuremath{0}}

\newcommand{\dimension}{\ensuremath{d}}
\newcommand{\ORDHFL}{\mathit{ord}}
\newcommand{\ordhfl}[1]{\ORDHFL(#1)}

\newcommand{\sem}[3]{\ensuremath{\llbracket #1 \rrbracket^{#2}_{#3}}}
\newcommand{\semm}[2]{\sem{#1}{#2}{}}

\newcommand{\tail}{\vdash_{\textsc{tail}}}

\newcommand{\mytrue}{\ensuremath{\mathtt{t\!t}}}
\newcommand{\myfalse}{\ensuremath{\mathtt{f\!f}}}
\newcommand{\lvars}{\ensuremath{\mathbf{L}}}
\newcommand{\fvars}{\ensuremath{\mathbf{F}}}
\newcommand{\fvarsbis}{\ensuremath{\mathbf{G}}}
\newcommand{\mudiam}[2]{\ensuremath{\langle #1 \rangle_{#2}}}
\newcommand{\mubox}[2]{\ensuremath{[#1]_{#2}}}
\newcommand{\subst}[2]{\{#1\}#2}

\newcommand{\nxt}{\mathit{next}}

\newcommand{\SEMTRANS}{\mathit{tptr}}
\newcommand{\semtrans}[3]{\SEMTRANS^{#2}_{#3}(#1)}
\newcommand{\SYNTRANS}{\mathit{trans}}
\newcommand{\syntrans}[1]{\SYNTRANS(#1)}

\allowdisplaybreaks

\begin{abstract}
Otto's Theorem characterises the bisimulation-invariant PTIME queries over graphs as exactly
those that can be formulated in the polyadic $\mu$-calculus. 
This extends to a characterisation of the full bisimulation-invariant exponential-time hierarchy,
using a higher-order extension of the polyadic $\mu$-calculus, called PHFL, which is naturally
stratified by type order. The bisimulation-invariant fragment of $k$-EXPTIME is captured
by the fragment of PHFL whose formulas have type order at most $k$. This is obtained using a
generalisation of the Immerman-Vardi Theorem by Freire and Martins.

Here we provide such a characterisation of the exponential-space hierarchy. We show
that the bisimulation-invariant fragment of $k$-EXPSPACE is captured by the fragment of PHFL of type 
order $k+1$ whose formulas are tail-recursive. Tail recursion limits the interplay
between fixpoints and higher-order functions so that model checking based on fixpoint iteration
can be made more space efficient in a way comparable to tail recursion in programming languages. The capturing 
result makes use of a generalisation of Vardi's characterisation of PSPACE as first-order logic with 
partial fixpoints, namely that $k$-EXPSPACE is captured by higher-order logic of order $k+1$ with 
partial fixpoints.
\end{abstract}

\section{Introduction}

Descriptive complexity is the subfield of complexity theory that characterises computational problems by means of the
resources needed to \emph{describe} them, rather than the resources (like time and space in some model of computation)
needed to \emph{solve} them. It is appealing because it yields characterisations of complexity classes that are 
independent of an underlying machine model. Hence, it links computational power to expressive power. The primary 
aim in descriptive complexity is to show that a complexity class $\mathcal{C}$ is \emph{captured} by a logic $\mathcal{L}$,
meaning that (\textsc{i}) the model checking problem for $\mathcal{L}$ over finite structures is in class $\mathcal{C}$, 
and (\textsc{ii}) for every problem $P \in \mathcal{C}$ there is an $\mathcal{L}$-formula describing exactly the positive
instances of $P$.

The study of descriptive complexity started with Fagin's result \cite{Fagin74}, stating that the class \NPTime is captured 
by $\exists$SO, the existential fragment of second-order logic. With tools at hand to separate the expressiveness of 
logics like Ehrenfeucht-Fra\"{\i}ss{\'e} games \cite{Eh61,Fra-quelques-crb}, 
descriptive complexity opens up new ways to separate complexity classes. Unfortunately, for the famous case of
\NPTime vs.\ \PTime, no logical characterisation is known so far, despite ongoing effort in this direction, cf.\  
\cite{DBLP:conf/lics/Grohe08,DBLP:journals/siglog/DawarP24}.
The major obstacle here is the fact that structures are naturally ordered when represented as inputs to Turing machines
or other forms of algorithms, while they are naturally unordered in the eyes of a logical formula. For logics like
$\exists$SO and beyond, this difference is irrelevant since $\exists$SO can express the existence of a total order on 
the structure's universe. A natural characterisation of the complexity class \PTime is known in the form of FO(LFP), first-order
logic with least fixpoints, but this only works when a total order is given in the structure itself 
\cite{DBLP:conf/stoc/Vardi82,DBLP:journals/iandc/Immerman86}. This is known as the Immerman-Vardi Theorem.

Otto studied a fragment of \PTime, namely the set of graph problems that are bisimulation-invariant, denoted \bisim{\PTime},
and showed that this class is captured by the polyadic $\mu$-calculus \polymucalc \cite{Otto99}. This logic extends the 
well-known modal $\mu$-calculus so that its formulas are interpreted in tuples of states rather than single states
\cite{Andersen:PMC:1994,Otto99}. This capturing result is remarkable in that it works for unordered graphs, the reason
being that \polymucalc -- just like $\exists$SO for total orders -- is able to define a partial order on the nodes of
an underlying graph which is actually a total order on the bisimulation quotient. This enables a relatively simple
translation from FO(LFP) into \polymucalc, yielding the lower bound of the capturing result. 
The upper bound follows since fixed formulas of \polymucalc can be model checked in polynomial time 
\cite{Andersen:PMC:1994,Otto99,LL-FICS12}.   

Otto's result raises the question of the possibility to capture the bisimulation-invariant fragments of other complexity
classes, for instance by means of other extensions of the modal $\mu$-calculus. A very powerful logic is \polymucalc's
higher-order extension \PHFL \cite{conf/ifipTCS/LangeL14}. Based on an extension of the Immerman-Vardi Theorem showing
that $k$-\ExpTime for $k >0$ is captured by the order-$(k+1)$ fragment of HO(LFP), higher-order logic with least fixpoints 
\cite{FREIRE201171}, one obtains that \bisim{$k$-\ExpTime} is captured by the order-$k$ fragment of \PHFL 
\cite{DBLP:journals/corr/abs-2209-10311}.\footnote{Note the discrepancy in the traditional ways to assign numerals to orders 
in the two logics considered here: order $1$ in HO(LFP) refers to, like in ``first-order logic'', individual elements and 
order $2$ is for relations like sets thereof. In \PHFL, order $1$ refers to the order of a function, i.e.\ one that takes 
sets of arguments. Hence there is shift by $1$ when going from the predicate logic to the modal logic.} Note that for $k=0$
this is exactly Otto's Theorem with $0$-\ExpTime = \PTime and order-$0$ \PHFL being \polymucalc.

In this paper we aim for a characterisation similar to that of the bisimulation-invariant exponential time-hierarchy, but for the
bisimulation-invariant exponential space hierarchy. Here, the anchor from classical descriptive complexity is not the
Immerman-Vardi Theorem, but the Abiteboul-Vardi-Vianu Theorem stating that the class \PSpace is captured by \FOPFP,
first-order logic with \emph{partial} fixpoints, at least over ordered structures \cite{JACM::AbiteboulVV1997}. Again,
the need for a total order disappears for bisimulation-invariant graph queries (at least for polynomial time and above),
and \FOPFP\ can be translated into $\phfl{1}{}$, the order-$1$ fragment of \PHFL. Note that full $\phfl{1}{}$ is known to 
capture \bisim{\ExpTime}, but the formulas obtained by translating \FOPFP\ happen to fall into a particular fragment,
namely $\trphfl{1}{}$, the \emph{tail-recursive} formulas of this kind. Tail recursion in higher-order modal fixpoint logics
has been studied in the context of space-efficient model checking \cite{conf/rp/BruseLL17}. Here we complete the picture
and extend these known characterisations to capturing results along the entire exponential-space hierarchy, showing that
\bisim{$k$-\ExpSpace} is captured by the tail-recursive fragment of order-$(k+1)$ \PHFL, thus completing the picture in
Fig.~\ref{fig:bisimcomplexity}. 

\begin{figure*}[t]
\begin{center}
  \begin{tikzpicture}
     
     \matrix [row sep=1.5mm, column sep=9mm] (cc) {
           & \node (dotsspace) {$\vdots$}; & & & \node (dotstr) {$\vdots$}; \\  
           & \node (bis2expspace) {\bisim{2-\ExpSpace}}; & & & \node (phfl3tr) {$\trphfl{3}{}$}; \\
           \node (bis2exptime) {\bisim{2-\ExpTime}}; & & & \node (phfl2) {$\phfl{2}{}$}; & \\
           & \node (bisexpspace) {\bisim{\ExpSpace}}; & & & \node (phfl2tr) {$\trphfl{2}{}$}; \\
           \node (bisexptime) {\bisim{\ExpTime}}; & & & \node (phfl1) {$\phfl{1}{}$}; & \\
           & \node (bispspace) {\bisim{\PSpace}}; & & & \node (phfl1tr) {$\trphfl{1}{}$}; \\
           \node (bisptime) {\bisim{\PTime}}; & & \node (polymucalc) {\polymucalc}; & \node (phfl0) {$\phfl{0}{}$}; & \\
           & \node (bisnlogspace) {\bisim{\NLogSpace}}; & & & \node (phfl0tr) {$\trphfl{0}{}$}; \\
           };
     
     \tikzset{every node/.style={font=\scriptsize}}
      
     \draw[semithick] (bis2expspace) edge node [above,pos=.1] {Thm.~\ref{thm:capture}} (phfl3tr)
           (bis2exptime) edge node [above,pos=.655] {\cite{DBLP:journals/corr/abs-2511-02596}} (phfl2)
           (bisexpspace) edge node [above,pos=.12] {Thm.~\ref{thm:capture}} (phfl2tr)
           (bisexptime) edge node [above,pos=.665] {\cite{conf/ifipTCS/LangeL14}} (phfl1)
           (bispspace) edge node [above,pos=.16] {\cite{conf/ifipTCS/LangeL14}} (phfl1tr)
           (bisptime) edge node [above,pos=.925] {\cite{Otto99}} (polymucalc)
           (polymucalc) edge node [above] {triv.} (phfl0)
           (bisnlogspace) edge node [above,pos=.1] {\cite{conf/ifipTCS/LangeL14}} (phfl0tr);
           
     \draw[thin] (bisnlogspace) edge (bisptime)
           (bisptime) edge (bispspace)
           (bispspace) edge (bisexptime)
           (bisexptime) edge (bisexpspace)
           (bisexpspace) edge (bis2exptime)
           (bis2exptime) edge (bis2expspace)
           (bis2expspace) edge (dotsspace)
           (phfl0tr) edge (phfl0)
           (phfl0) edge (phfl1tr)
           (phfl1tr) edge (phfl1)
           (phfl1) edge (phfl2tr)
           (phfl2tr) edge (phfl2)
           (phfl2) edge (phfl3tr)
           (phfl3tr) edge (dotstr);
           
     \path (bisptime) -- ++(-2.5,-.35) coordinate (bislinel);
     \draw[dashed,thin] (bislinel) -- ++(14.5,0) coordinate (bisliner);
           
     \path (bislinel) -- ++(-.1,0) node [anchor=south west] {\parbox{2cm}{\scriptsize general graphs}};
     \path (bislinel) -- ++(-.1,0) node [anchor=north west] {\parbox{2cm}{\scriptsize $\sim$-partially- \\ordered graphs}};

  \end{tikzpicture}
\end{center}
\caption{The levels of the exponential time- and space hierarchies are captured by fragments of \PHFL.}
\label{fig:bisimcomplexity}
\end{figure*}

The remainder of this paper is organised as follows. In Sect.~\ref{sec:prel} we recall definitions of \PHFL, 
higher-order logic with fixpoints, and central notions from descriptive complexity. In Sect.~\ref{sec:upper} we
show that queries definable in order-$(k+1)$ tail-recursive \PHFL fall into \bisim{$k$-\ExpSpace}. In Sect.~\ref{sec:lower}
we show that the converse is also true. In Sect.~\ref{sec:concl} we conclude with a few remarks on further work in this
direction. 

\section{Preliminaries}
\label{sec:prel}

Let $\Prop$ and $\Actions$ be finite sets of \emph{propositions} and \emph{actions}. 
A labelled transition system (LTS) is a tuple $\Transsys = (\States, \{\Transition{}{a}{}\}_{a\in\Actions},\Label)$
where $\States$ is a nonempty set of \emph{states},  $\Transition{}{a}{} \subseteq \States  \times \States$ 
is the \emph{transition relation} for each action $a \in \Actions$, and $\Label\colon\States \to 2^{\Prop}$ 
maps a state to the set of propositions that it satisfies.
We write $\Transition{\state}{a}{\statealt}$ instead of $(\state, \statealt) \in \Transition{}{a}{}$.

For $d \ge 1$, a $\dimension$-\emph{pointed LTS} is a  pair $\Transsys, (\state_0,\dotsc,\state_{\dimension-1})$ of an LTS and a $\dimension$-tuple 
of states in it. It is simply called \emph{pointed} when $\dimension$ is clear from the context. If $d=1$ then
this coincides with the standard notion of a pointed LTS. Given some tuple $\overline{\state} = (\state_0,\dotsc,\state_{\dimension-1})$
and $i < \dimension$, we write $\overline{\state}[i]$ to denote $\state_i$, and $\overline{\state}[\statealt/i]$ to denote the tuple 
$(\state_0,\dotsc,\state_{i-1},\statealt,\state_{i+1},\dotsc,\state_{\dimension-1})$.

\subsection{Polyadic Higher-Order Fixpoint Logic}

We assume familiarity with the modal $\mu$-calculus \mucalc \cite{DBLP:conf/focs/Pratt81,Kozen83}. Polyadic Higher-Order Fixpoint Logic \PHFL 
\cite{conf/ifipTCS/LangeL14} extends \mucalc in two ways: (\textsc{i}) \HFL \cite{DBLP:conf/concur/ViswanathanV04} adds 
to it a simply typed $\lambda$-calculus; (\textsc{ii}) polyadic $\mucalc$, denoted by  \polymucalc \cite{Andersen:PMC:1994,Otto99},
is obtained by lifting the interpretation of formulas in states to tuples of states of fixed arity. \PHFL merges both extensions. 
Its formal definition requires a few technicalities.

\paragraph{Types.}

Types govern the syntax of \PHFL, especially those parts that denote functions. Types are derived from the grammar
\[
\tau \Coloneqq \grtype \mid \tau^v \to \tau
\]
where $\grtype$ is a ground type for propositions and $v \in \{\varpos, \varneg, \varboth\}$ is a \emph{variance} denoting 
whether a function is monotonically increasing, monotonically decreasing, or constant in its argument. Variances are 
only needed to check well-typedness; we often do not display them for the sake of readability. 

The \emph{order} $\ORDHFL(\tau)$ of a type $\tau$ is defined via $\ordhfl{\grtype} = 0$ and 
$\ordhfl{\tau_1 \to \tau_2} = \max(\ordhfl{\tau_1}+1, \ordhfl{\tau_2})$.
Types are right-associative, thus every type is of the form $\tau_1 \to \dotsb \to \tau_n \to \grtype$.

Given some LTS $\Transsys = (\States, \{\Transition{}{a}{}\}_{a\in\Actions},\Label)$ and some $0 < \dimension \in \Nat$,
the semantics $\semm{\tau}{\Transsys}$ of a type $\tau$ is a complete lattice, defined inductively as 
\begin{align*}
\semm{\grtype}{\Transsys} &= (2^{\States^{\dimension}}, \subseteq) &
\semm{\tau_2^v \to \tau_1}{\Transsys} &= (\semm{\tau_2}{\Transsys} \to \semm{\tau_1}{\Transsys}, \sqsubseteq_{\tau_2^v \to \tau_1})
\end{align*}
where we tacitly identify a lattice with its domain and where the order $\sqsubseteq_{\tau_2^v \to \tau_1}$ is given by 
\[ f \sqsubseteq_{\tau_2^v \to \tau_1} g \quad \Leftrightarrow \quad \text{for all } x \in \semm{\tau_2}{\Transsys} \text{ we have } 
\left\{ \begin{aligned} 
f(x) \sqsubseteq_{\tau_1} g(x) &, \text{ if } v = \varpos \\
g(x) \sqsubseteq_{\tau_1} f(x) &, \text{ if } v = \varneg \\
f(x) = g(x) &, \text{ if }  v = \varboth.
 \end{aligned}\right.\]
Hence, the semantics of $\grtype$ is the powerset lattice over $\States^{\dimension}$ and the semantics of $\tau_2^v \to \tau_1$
is the lattice of all monotonically increasing, monotonically decreasing or constant functions 
from $\semm{\tau_2}{\Transsys}$ to $\semm{\tau_1}{\Transsys}$,
depending on $v$. Such a set together with the point-wise order above forms a complete lattice if
$\semm{\tau_1}{\Transsys}$ is one. Hence,
monotone functions in such a lattice always have a least and greatest fixpoint due to the 
Knaster-Tarski-Theorem.
We write $\bigsqcup_{\tau_2^v \to \tau_1}$ to denote the join operator in this lattice.

Note that, technically, $\semm{\tau}{\Transsys}$ is also parametrised in $\dimension$. However,
since the dimensionality $\dimension$ is always fixed and usually clear from context, we do not display it to avoid clutter.

\paragraph{Syntax.}

Let $\Prop$ and $\Actions$ be as above and let $\fvars$ and $\lvars$ be finite sets of \emph{fixpoint variables}, resp.\ 
\emph{lambda variables}. We use upper case letters $X, Y, \dots$ for the former and lower case letters $x, y, f, g$ for the latter.

Let $d > 0$. By $[d]$ we denote the set $\{0,\ldots,d-1\}$. The set of -- potentially non-well-formed -- formulas of the $\dimension$-adic fragment of
\PHFL, called $\phfl{}{\dimension}$, is derived via
\[
\varphi \Coloneqq p_i \mid \varphi \vee \varphi \mid \neg \varphi \mid \mudiam{a}{i}\varphi \mid \subst{\sigma}{\varphi} 
\mid \lambda(x \colon \tau).\ \varphi \mid x \mid (\varphi\, \varphi) \mid \mu (X\colon \tau).\, \varphi \mid X
\]
where $x \in \lvars$, $X \in \fvars$, $p \in \Prop$, $a \in \Actions$, $0 \leq i < d$, $\tau$ is a type, and 
$\sigma\colon[d] \to [d]$ is a mapping on so-called \emph{indices}. We assume that standard constants and connectives
such as $\myfalse$, $\wedge, \mubox{a}{i}$ and $\nu (X\colon\tau)$ are available via the obvious dualities. We also
allow ourselves to use more convenient notation for the definition of (more complex) functions and their applications.
For instance, $(\ldots((\varphi\,\psi_0)\,\psi_1)\ldots)\, \psi_{n-1}$ is simply written as 
$\varphi(\psi_0,\ldots,\psi_{n-1})$, and $\lambda (x_0\colon\tau_0).\lambda (x_1\colon\tau_1).\ldots \lambda (x_{n-1}\colon\tau_{n-1}).\psi$
is written as $\lambda (x_0\colon \tau_0,\ldots,x_{n-1}\colon\tau_{n-1}).\psi$, or even $\lambda(x_0,\dotsc,x_{n-1}\colon\tau).\psi$ 
if $\tau = \tau_i$ for all $0 \leq i < n$.

The notions of free and bound variables in a formula are as usual, with $\lambda (x\colon\tau).\varphi$ binding $x$
and with $\mu (X\colon\tau).\ \varphi$ binding $X$. 

Over an LTS $\Transsys$, a $\phfl{}{d}$ formula intuitively defines a set of $d$-tuples in $\Transsys$, or a function that 
transforms such a set into such a set, or into a function etc., depending on the formula's type which we explain shortly. 
Before that we briefly introduce the intuitive meaning of the operators in the syntax. The first five listed in the grammar
above all define a subset of $\States^d$. For example, $p_i$ denotes all tuples such that $p$ holds at their $i$th state.
The operator $\subst{\sigma}{\varphi}$ rearranges the positions in tuples in such a subset. The modality $\mudiam{a}{i}$
expresses that its argument formula holds on a tuple after replacing the $i$th state in it by some $a$-successor. A formula
of the form $\lambda(x\colon\tau^v).\, \varphi$ defines a function that consumes an argument of type $\tau$ and
returns the value of $\varphi$, provided that $x$ is bound to that argument. The typing annotation also claims that this
function is monotonically increasing, monotonically decreasing, or constant in this argument, depending on $v$. A formula 
of the form $(\varphi\, \psi)$ 
denotes the application of the semantics of $\varphi$ to the object defined by $\psi$. Finally, fixpoints
can now also define higher-order functions.

Obviously, not all formulas that can be derived from the above grammar can be given a semantics in a meaningful way;  
consider e.g.\ $(p_1\,p_2)$. Moreover, as is typical in a situation involving least and greatest fixpoints, 
the use of negation has to be restricted, cf.\ the
example $\mu (X\colon\grtype).\ \neg X$. Hence, \PHFL has a type system to filter out formulas that cannot be endowed with
a proper semantics, and to associate a unique type to each subformula. The typing rules are standard. We refer to e.g.\ \cite{DBLP:journals/corr/abs-2209-10311} for the full system.

A formula is said to be of order $k$ if the maximal order of the type of any subformula in $\varphi$
is $k$. By $\phfl{k}{d}$ we denote the set of well-typed formulas in $\phfl{}{d}$ that are of order $k$. 
Note that $\PHFL$ does indeed constitute an extension of other known formalisms, namely
\begin{itemize}
\item $\phfl{k}{1}$ is the same as $\hfl{k}$ for any $k \ge 0$ and, thus $\phfl{}{1} = \HFL$, and in particular
\item $\phfl{0}{1}$ is just the modal $\mu$-calculus \mucalc while
\item $\phfl{0}{d}$ is the $d$-dimensional polyadic $\mu$-calculus \polymucalc.
\end{itemize}

\paragraph{Semantics.}

Let $\Transsys = (\States, \{\Transition{}{a}{}\}_{a\in\Actions},\Label)$ be an LTS and let $\varphi$
be a well-typed formula.
An \emph{environment} is a function $\eta$ that assigns to each fixpoint variable and each lambda variable
of type $\tau$ an element of $\semm{\tau}{\Transsys}$.

Let $d > 0$. The semantics $\sem{\varphi\colon\tau}{\Transsys}{\eta}$ of a $\phfl{}{\dimension}$ formula $\varphi$ of 
type $\tau$, relative to an LTS $\Transsys$ and an environment $\eta$, is an object of $\semm{\tau}{\Transsys}$,
defined recursively as follows.
\begin{align*}
\sem{{p_i \colon \grtype}}{\Transsys}{\eta} & = \{(\state_0,\dotsc,\state_{\dimension-1}) \in \States^\dimension \mid p \in \Label(\state_i)\} \\
\sem{{\varphi_1 \vee \varphi_2 \colon \grtype}}{\Transsys}{\eta} & = \sem{{\varphi_1 \colon \grtype}}{\Transsys}{\eta} \cup
\sem{{\varphi_2 \colon \grtype}}{\Transsys}{\eta} \\
\sem{{\neg \varphi \colon \grtype}}{\Transsys}{\eta} & = \States^\dimension \setminus \sem{{\varphi\colon\grtype}}{\Transsys}{\eta} \\
\begin{split}\sem{{\mudiam{a}{i}\varphi\colon \grtype}}{\Transsys}{\eta} & = \{(\state_0,\dotsc,\state_{\dimension-1}) \in \States^\dimension \mid \text{ ex. } \statealt\text{ s.t. } \\ & \qquad \qquad \qquad (\state_0,\dotsc,\state_{i-1},\statealt,\state_{i+1}, \dotsc, \state_{\dimension-1}) \in \sem{{\varphi \colon \grtype}}{\Transsys}{\eta} \text{ and } \Transition{\state_i}{a}{\statealt}\} \end{split}\\
\sem{{\subst{\sigma}{\varphi}\colon \grtype}}{\Transsys}{\eta} & = \{(\state_0,\dotsc,\state_{\dimension-1}) \in \States^\dimension \mid(\state_{\sigma(0)},\dotsc,\state_{\sigma(\dimension-1)}) \in \sem{{\varphi \colon \grtype}}{\Transsys}{\eta}\} \\
\sem{{\lambda (x^v \colon \tau_2).\ \varphi \colon \tau_2^v \to \tau_1}}{\Transsys}{\eta} &= f \in \semm{\tau_2^v \to \tau_1}{\Transsys} \text{ s.t.~f.a. } y \in \semm{\tau_2}{\Transsys}.\ f(y) 
= \sem{{\varphi \colon\tau_1}}{\Transsys}{\eta[x \mapsto y]} 
\\
\sem{{x \colon \tau}}{\Transsys}{\eta} &= \eta(x) \\
\sem{{(\varphi_1\,\varphi_2) \colon \tau_1}}{\Transsys}{\eta} &= \sem{{\varphi_1 \colon \tau_2^v \to \tau_1}}{\Transsys}{\eta}\, (\sem{{\varphi_2 \colon \tau_2}}{\Transsys}{\eta}) \\
\sem{{\mu (X\colon \tau).\varphi\colon\tau}}{\Transsys}{\eta} &= \bigsqcap_{\tau\to\tau} \{f \in \sem{\tau}{\Transsys}{\eta}\mid \sem{{\varphi\colon \tau}}{\Transsys}{\eta[X\mapsto f]} \sqsubseteq_{\tau} f \} \\
\sem{{X \colon \tau}}{\Transsys}{\eta} &= \eta(X)
\end{align*}
If the type is clear from context, or not important, we simply write $\sem{\varphi}{\Transsys}{\eta}$. We write
$\Transsys, (\state_0,\dotsc,\state_{\dimension-1}) \models_\eta \varphi$ if $\varphi\colon\grtype$ and  
$(\state_0,\dotsc,\state_{\dimension-1}) \in \sem{\varphi}{\Transsys}{\eta}$. We say that two formulas $\varphi$ and
$\psi$ are \emph{equivalent}, written $\varphi \equiv \psi$, if for all $\Transsys$ and all $\eta$ we have 
$\sem{\varphi}{\Transsys}{\eta} = \sem{\psi}{\Transsys}{\eta}$.

It is well-known that the semantics of HFL is invariant under $\beta$-reduction and admits the
fixpoint unfolding principle. Both extend straight-forwardly to \PHFL, i.e.\ we have $(\lambda x.\, \varphi)\; \psi \equiv \varphi[\psi/x]$ and 
$\mu X.\, \varphi \equiv \varphi[\mu X.\, \varphi/X]$ where substitution of free variables 
is defined as usual.

\paragraph*{Tail Recursion.}

Tail recursion in programming refers to a pattern where the recursive call in a function definition happens last,
i.e.\ the result of the function is that of the recursive call. This has two consequences: first, tail-recursive
definitions require less space when executed, since no call stack has to be maintained. Second, tail-recursive
functions are easily converted into while loops. The concept of tail recursion can be brought
to higher-order fixpoint logic, and in fact to general higher-order definitions. It was first defined
in \cite{conf/ifipTCS/LangeL14} and has since been studied intensively for both
modal fixpoint logics \cite{conf/rp/BruseLL17, DBLP:conf/time/BruseLL22} and other other higher-order concepts \cite{DBLP:conf/vmcai/Bruse25}.

Tail recursion for PHFL induces the following two requirements on a formula: (\textsc{i}) all subformulas in
\emph{operand} position, i.e., $\psi_2$ in $\psi_1\ \psi_2$, must be closed w.r.t.\ variables from $\fvars$,
and (\textsc{ii}) formulas with non-monotonic semantics or universal branching, i.e.\ negation or the modal box, must be closed w.r.t. variables from $\fvars$, and as a special case,
one conjunct of a conjunction must also be closed w.r.t.\ these variables. 
We refer to \cite{conf/rp/BruseLL17} for a 
detailed exposition regarding tail-recursive formulas in terms of space-efficient model checking and further details.

\begin{definition}
\label{def:tailrec}
A PHFL formula $\varphi$ is tail recursive if the statement $\fvarsbis \tail \varphi$ can be derived
for some $\fvarsbis \subseteq \fvars$ using the rules in Fig.~\ref{fig:phfl-tail}. 
\end{definition}

\begin{figure}
\begin{mathpar}

\inferrule{ }{\emptyset \tail p_i} 

\inferrule{ }{\emptyset \tail x}

\inferrule{ }{\{X\} \tail X}

\inferrule{\fvarsbis_1 \tail \varphi_1 \\ \fvarsbis_2 \tail \varphi_2}{\fvarsbis_1 \cup \fvarsbis_2 \tail \varphi_1 \vee \varphi_2}

\inferrule{\emptyset \tail \varphi_1 \\ \fvarsbis \tail \varphi_2}{\fvarsbis \tail \varphi_1 \wedge \varphi_2}

\inferrule{\emptyset \tail \psi}{\emptyset \tail \neg \psi}

\inferrule{\fvarsbis \tail \psi}{\fvarsbis \tail \mudiam{a}{i} \psi}

\inferrule{\emptyset \tail \psi}{\emptyset \tail \mubox{a}{i} \psi}

\inferrule{\fvarsbis \tail \psi}{\fvarsbis \tail \{\sigma\} \psi}

\inferrule{\fvarsbis \tail \psi}{\fvarsbis \tail \lambda x.\ \psi}

\inferrule{\fvarsbis \tail \psi_1 \\ \emptyset \tail \psi_2}{\fvarsbis \tail \psi_1\ \psi_2}

\inferrule{\fvarsbis \tail \psi}{\fvarsbis \setminus \{Y\} \tail \mu Y.\ \psi}

\inferrule{\fvarsbis \tail \psi}{\fvarsbis \setminus \{Y\} \tail \nu Y.\ \psi}
\end{mathpar}
\caption{The rules governing tail recursion for PHFL.}
\label{fig:phfl-tail}
\end{figure}

\paragraph{Bisimilarity.} A \emph{bisimulation} $R$ on an LTS 
$\Transsys = (\States, \{\Transition{}{a}{}\}_{a \in \Actions}, \Label)$ is a symmetric relation $R \subseteq \States \times \States$
satisfying the following for all $(s,t) \in R$.
\begin{itemize}
\item $\Label(s) = \Label(t)$,
\item if there is $a \in \Actions$ and $s' \in \States$ s.t.\ $\Transition{s}{a}{s'}$ then there is $t' \in \States$ with
      $\Transition{t}{a}{t'}$ and $(s',t') \in R$,
\item if there is $a \in \Actions$ and $t' \in \States$ s.t.\ $\Transition{t}{a}{t'}$ then there is $s' \in \States$ with
      $\Transition{s}{a}{s'}$ and $(s',t') \in R$.     
\end{itemize} 
Two states $s,t$ are bisimilar, written $s \sim t$, if there is a bisimulation $R$ with $(s,t) \in R$.
Two pointed LTS $\Transsys, (\state_0,\dotsc,\state_{\dimension-1})$ and $\Transsys', (\state'_0,\dotsc,\state'_{\dimension-1})$ are bisimilar if $\state_i \sim \state'_i$ for all $0 \leq i < \dimension$.

Let $d > 0$. A set $T \subseteq \States^d$ is called \emph{bisimulation-invariant} if for all 
$\bar{s} = (s_0,\ldots,s_{\dimension-1}),\bar{t} = (t_0,\ldots,t_{\dimension-1}) \in \States^d$ such that $s_i \sim t_i$ for all $i \in [d]$, we 
have $\bar{s} \in T$ iff $\bar{t} \in T$. The notion of bisimulation-invariance can straight-forwardly be lifted to 
objects of type $\semm{\tau}{\Transsys}$ for types $\tau \ne \grtype$, cf.\ \cite{DBLP:conf/concur/ViswanathanV04}. 

It is well-known that modal logics cannot distinguish bisimilar models, and it is not surprising that the extensions
beyond pure modal logic that are available in \PHFL do not break this property.

\begin{proposition}[\cite{DBLP:conf/concur/ViswanathanV04,conf/ifipTCS/LangeL14}]
\label{prop:bisiminv}
Let $d > 0$, $\Transsys$ be an LTS with state set $\States$ and $\varphi$ be a closed $\phfl{}{d}$ formula of type $\grtype$. Then 
$\semm{\varphi}{\Transsys} \subseteq \States^d$ is bisimulation-invariant.
\end{proposition}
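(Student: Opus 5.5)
We prove a stronger statement by induction on the typing derivation of $\varphi$: for every well-typed formula $\varphi\colon\tau$, possibly with free variables, and every environment $\eta$ that maps each free variable of type $\tau'$ to a bisimulation-invariant element of $\semm{\tau'}{\Transsys}$, the value $\sem{\varphi}{\Transsys}{\eta}$ is a bisimulation-invariant element of $\semm{\tau}{\Transsys}$. The proposition is the special case of closed formulas of type $\grtype$.

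We first fix the lifted notion of invariance, in the style of a logical relation. For the ground type, $\bar s \sim \bar t$ means componentwise bisimilarity, and $T \subseteq \States^d$ is invariant if it is closed under $\sim$. Define a relation $\approx_\tau$ on $\semm{\tau}{\Transsys}$:
\begin{itemize}
\item $A \approx_\grtype B$ iff for all $\bar s \sim \bar t$ we have $\bar s \in A \Leftrightarrow \bar t \in B$;
\item $f \approx_{\tau_2\to\tau_1} g$ iff $x \approx_{\tau_2} y$ implies $f(x) \approx_{\tau_1} g(y)$.
\end{itemize}
An object $f$ is \emph{invariant} iff $f \approx_\tau f$. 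We strengthen the induction hypothesis accordingly: if $\eta \approx \eta'$ pointwise, then $\sem{\varphi}{\Transsys}{\eta} \approx_\tau \sem{\varphi}{\Transsys}{\eta'}$.

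The ground-type cases are routine.
\begin{itemize}
\item $p_i$ is invariant because bisimilar states have equal labels.
\item Disjunction and negation preserve $\approx_\grtype$ immediately.
\item For $\subst{\sigma}{\varphi}$, note that $\bar s\sim\bar t$ implies $(s_{\sigma(0)},\dots)\sim(t_{\sigma(0)},\dots)$.
\item For $\mudiam{a}{i}\varphi$, take $\bar s\sim\bar t$ and a witness $s_i \xrightarrow{a} s'$. The back-and-forth clause gives $t_i\xrightarrow{a}t'$ with $s'\sim t'$, so $\bar s[s'/i]\sim\bar t[t'/i]$, and the hypothesis on $\varphi$ transfers membership. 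Symmetry gives the converse direction.
\end{itemize}

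The higher-order cases are also straightforward.
\begin{itemize}
\item Variables follow from the assumption on $\eta,\eta'$.
\item For $\lambda x.\varphi$, take $y\approx y'$ and apply the hypothesis to $\eta[x\mapsto y]\approx\eta'[x\mapsto y']$.
\item Application follows directly from the definition of $\approx$ at arrow types.
\end{itemize}

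The main obstacle is the fixpoint case $\mu X.\varphi$. Since $\semm{\tau}{\Transsys}$ is finite for a finite LTS, I would compute the least fixpoint as the limit of the approximants $X^0=\bot_\tau$, $X^{j+1}=\sem{\varphi}{\Transsys}{\eta[X\mapsto X^j]}$. For infinite $\Transsys$, I would instead use ordinal approximants and take meets and joins at limit stages. The argument proceeds in three steps.
\begin{enumerate}
\item Show by induction on $\tau$ that $\bot_\tau\approx_\tau\bot_\tau$. At ground type $\bot$ is $\emptyset$; at arrow types $\bot$ is the constant-$\bot$ function.
\item Show that $\approx_\tau$ is closed under joins of chains in the following sense: if $f_j\approx g_j$ for all $j$ and both sequences are increasing, then $\bigsqcup f_j\approx\bigsqcup g_j$. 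At ground type this is a union argument. At arrow types joins are pointwise, so we reduce to the result type.
\item Run the approximant sequences for $\eta$ and $\eta'$ in lockstep. The induction hypothesis for $\varphi$ gives $X^j\approx X'^j$ at successor stages, and step 2 handles limit stages, so the two fixpoints are related.
\end{enumerate}

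A subtlety is variance. For negative or constant arguments the chain may be decreasing, so step 2 is also needed for meets, which is dual. To avoid problems with $\approx_\tau$ being non-transitive at higher types, the relation-between-two-environments formulation used throughout keeps every step local. Greatest fixpoints follow by the dual argument.
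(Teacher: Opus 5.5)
Your proof is correct and takes essentially the standard route. The paper gives no proof of its own and only imports the result from the cited works, where bisimulation invariance of HFL and its polyadic extension is obtained by this kind of type-indexed induction: a bisimulation-respecting relation is lifted along arrow types, and fixpoints are handled by running the approximant sequences in lockstep.

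As a minor remark, $A \approx_{\grtype} B$ already forces $A = B$ (take $\bar s = \bar t$), and $\approx_\tau$ is in fact a partial equivalence relation, so your worry about non-transitivity is unfounded. Within a single LTS, the simpler unary predicate ``maps invariant arguments to invariant values'' would already carry the induction through.
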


\begin{example}[\cite{Otto99}]
\label{ex:bisim}
A standard example shows that bisimilarity itself is definable in $\phfl{0}{2}$, provided that $\Prop$ and 
$\Actions$ are finite. The $\phfl{0}{2}$ formula
\begin{displaymath}
\varphi_\sim := \nu (X: \grtype). (\bigwedge\limits_{p \in \Prop} p_0 \leftrightarrow p_1) \wedge 
(\bigwedge\limits_{a \in \Actions} \mubox{a}{0}\mudiam{a}{1}X) \wedge \subst{0 \mapsto 1, 1 \mapsto 0}X
\end{displaymath}
is satisfied by a pair $(s,t)$ of some $\Transsys$ iff $s \sim t$. The formula essentially states that $(s,t)$ needs to belong to
the largest set $X$ of states $(s',t')$ that agree on all propositions (first conjunct) and for which $t'$ can match any $a$-transition
out of $s'$, for any $a \in \Actions$, to a pair $(s'',t'') \in X$ (second conjunct). Moreover, $X$ needs to be a symmetric relation (third
conjunct). 
\end{example}

Let $\varphi_\sim^{i,j}$ be the formula that is obtained from $\varphi_\sim$ above by replacing index $0$ with $i$ and index
$1$ with $j$. Note that it can be seen as $d$-ary for any $d$ with $0 \le i,j < d$. When interpreted in a $d$-tuple, it states
that its $i$-th and $j$-th component are bisimilar. This formula will be used several times in the following. 

To exemplify the use of higher-orderness (here: first-order functions) we can use the definability of finite-trace equivalence
in $\phfl{1}{2}$. 
\begin{example}
\label{ex:traceequiv}
Two states $s,t$ are finite-trace equivalent if whenever there is a sequence 
$\Transition{s}{a_0}{s_1}\Transition{}{a_1}{s_2}\Transition{}{a_2}{}$ $\Transition{\ldots}{a_{n-1}}{s_n}$ then there are $t_1,\ldots,t_n$ 
s.t.\ $\Transition{t}{a_0}{t_1}\Transition{}{a_1}{t_2}\Transition{}{a_2}{}\Transition{\ldots}{a_{n-1}}{t_n}$ and vice-versa.
\begin{displaymath} 
\varphi_{\mathsf{fte}} := \big(\nu (F\colon \grtype  \to \grtype ).\ \lambda (x, y \colon \grtype).\ (x \leftrightarrow y) \wedge \bigwedge\limits_{a \in \Actions} F(\mudiam{a}{1}x,\mudiam{a}{2}y))(\mytrue,\mytrue)
\end{displaymath}
is then satisfied by a pair $(s,t)$ iff $s$ and $t$ are finite-trace equivalent in the sense above. The greatest fixpoint in the formula
expresses an infinite conjunction over all finite paths, these can be thought of to be built step-wise via fixpoint unfolding; the $n$th unfolding
expresses that all finite traces of length $n$ are available in the first component of the tuple (and built via $\mudiam{a}{1}$) iff they
are available in the second component (built via $\mudiam{a}{2}$).
For better readability we have omitted the types in the formula. The types of $x,y$ are $\grtype$ and that of $F$ is, consequently,
$\grtype^\varboth \to \grtype^\varboth \to \grtype$. 

We remark that it is easily possible to extend the formula to also check for matching atomic propositions along the traces emerging
from $s$ and $t$, cf.\ \cite{LLVG:TCS:2014} for further examples of how various other process equivalences and preorders can be expressed in
$\phfl{1}{}$.
\end{example}

\subsection{Higher-Order Logic with Partial Fixpoints}

We introduce Higher-Order Logic with Partial Fixpoints (\HOPFP) to make use of the characterisation of $k$-\ExpSpace over the class of ordered structures as the queries definable in order-$(k+1)$ \HOPFP, 
shown in \cite{DBLP:journals/corr/abs-2511-02596} and generalising the characterisation of \PSpace
due to Vardi \cite{DBLP:conf/stoc/Vardi82}.

\paragraph{Types.}
Types for Higher-Order Logic with Partial Fixpoints  are constructed, similar to those for \PHFL, from a single base
type and one constructor: $\tau' \Coloneqq \indtype \mid (\tau',\dotsc,\tau')$. Here, however, $\indtype$ is the type of 
\emph{individuals} (like states rather than sets of states), and the tuple type is used to denote (higher-order) relations. 
In order to distinguish HO types from PHFL types, the former will always be decorated with a prime. We define the order
$\ordho{\tau'_0,\dots,\tau'_{n-1}} = 1 + \max \{ \ordho{\tau'_0},\dotsc,\ordho{\tau'_{n-1}} \}$.\footnote{Recall the discrepancy in the way 
that order is defined in \HOPFP vs.\ \PHFL, leading to an offset by one, as explained in the introduction.} 

Let $\Transsys = (\States, \{\Transition{}{a}{}\}_{a\in\Actions},\Label)$ be an LTS. This induces a set-theoretic 
interpretation of types via $\semm{\indtype}{\Transsys} = \States$ and
$\semm{(\tau'_0,\dots,\tau'_{n-1})}{\Transsys} = 2^{\semm{\tau'_0}{\Transsys} \times \dotsb \times \semm{\tau'_{n-1}}{\Transsys}}$.

\paragraph{Syntax.}
Let $\hovars = \{X,\dotsc\}$ be a countable set of \emph{higher-order variables}, each implicitly equipped with a type. 
Let $\Prop, \Actions$ be sets
of propositions, resp.\ actions. The syntax of \HOPFP formulas is derived from the following grammar:
\[
\varphi \Coloneqq p(X) \mid a(X,Y) \mid X(Y_0,\dotsc,Y_{n-1}) \mid \neg \varphi \mid \varphi \vee \varphi 
\mid \exists X.\ \varphi \mid \big(\pfp(X,Y_0,\dotsc,Y_{n-1}).\ \varphi\big)(Z_0,\dotsc,Z_{n-1})
\]
where $p \in \Prop$, $a \in \Actions$ and $X,Y_0,\dotsc,Y_{n-1},Z_0,\dotsc,Z_{n-1} \in \hovars$. Again, 
operators such as $\wedge$ are available via the obvious dualities. 
Variable $X$ is bound in $\exists X.\ \varphi$, and $X,Y_0,\dotsc,Y_{n-1}$ are bound in 
$\big(\pfp(X,\bar{Y}).\ \varphi\big)(\bar{Z})$ when $\bar{Y} = Y_0,\dotsc,Y_{n-1}$ and $\bar{Z} = Z_0,\dotsc,Z_{n-1}$.
A formula is \emph{well-formed} if each variable is bound at most once, and, moreover, variables occur
only in a way that matches their type. For example, only a variable of type $\indtype$ can occur in a subformula
of the form $p(X)$, and if a subformula of the form $X(Y_1,\dotsc,Y_n)$ or $\big(\pfp(X,Y_0,\dotsc,Y_{n-1}).\ \varphi\big)(Z_0,\dotsc,Z_{n-1})$ occurs, 
then $X$ has type $(\tau'_0,\dotsc,\tau'_{n-1})$
for some $\tau'_0,\dotsc,\tau'_{n-1}$ and each $Y_i$ (and $Z_i$, in the second case) has type $\tau'_i$. Note that there is \emph{no} requirement that, 
in a subformula of the
form $\big(\pfp(X,Y_0,\dotsc,Y_{n-1}).\ \varphi\big)(Z_0,\dotsc,Z_{n-1})$, the variable $X$ must occur only under an even number of
negations in $\varphi$. For a more detailed introduction into Higher-Order Logic including formal ways to define 
well-formedness of formulas we refer to \cite{vanbenthemdoets:1983a}.
An \HOPFP formula $\varphi$ has order $k$ if the order
of the highest type of a variable (note that these are implicitly typed) in $\varphi$ is at most $k$. We write $\hopfp{k}$ for the set of \HOPFP formulas
of order at most $k$.

\paragraph{Semantics.}
Let $\Transsys = (\States, \{\Transition{}{a}{}\}_{a\in\Actions},\Label)$ be an LTS.
A variable assignment $\alpha$ is a function that maps each variable in \hovars\ of type $\tau'$ into
$\semm{\tau'}{\Transsys}$. Its update $\alpha[X \mapsto M]$ is define as usual.

The satisfaction relation $\Transsys, \alpha \models \varphi$ between an LTS $\Transsys$ and a variable assignment $\alpha$ on one side, and an \HOPFP formula $\varphi$ on the other side
is defined inductively: 
\begin{align*}
\Transsys, \alpha \models p(X) &\text{ iff } p \in \Label(\alpha(X)) \\
\Transsys, \alpha \models a(X,Y) &\text{ iff } \Transition{\alpha(X)}{a}{\alpha(Y)} \\
\Transsys, \alpha \models X(Y_0,\dotsc,Y_{n-1}) &\text{ iff } (\alpha(Y_0),\dotsc,\alpha(Y_{n-1})) \in \alpha(X) \\
\Transsys, \alpha \models \neg \varphi &\text{ iff } \Transsys, \alpha \not \models \varphi \\
\Transsys, \alpha \models \varphi_1 \vee \varphi_2 &\text{ iff } \Transsys, \alpha \models \varphi_1 \text{ or } \Transsys, \alpha \models \varphi_2 \\
\Transsys, \alpha \models \exists X.\ \varphi &\text{ iff ex.\ } d \in \semm{\tau'}{\Transsys} \text{ s.t.\ } \Transsys, \alpha[X \mapsto d] \models \varphi \\
\Transsys, \alpha \models \big(\pfp(X,Y_0,\dotsc,Y_{n-1}).\ \varphi\big)(Z_0,\dotsc,Z_{n-1}) &\text{ iff } (\alpha(Z_0),\dotsc,\alpha(Z_{n-1})) \in \PFP(X,Y_0,\dotsc,Y_{n-1}, \varphi)
\end{align*}
The term $\PFP(X,Y_0,\dotsc,Y_{n-1}, \varphi)$ in the last case is defined via mutual recursion with the satisfaction
relation as follows.
Let $\varphi$ be an \HOPFP formula with free variables in $X, Y_0,\dotsc,Y_{n-1}$ and let $\tau' = (\tau'_0,\dotsc,\tau'_{n-1})$ be the type of $X$ in $\varphi$
while $\tau'_i$ is the type of $Y_i$ in $\varphi$ for $0 \leq i < n$. Note that $\tau'$ is a set type here. 
Then, given some variable assignment $\alpha$, the formula $\varphi$ defines a
function $f \colon \semm{\tau'}{\Transsys} \to \semm{\tau'}{\Transsys}$ via
\[f \colon M \mapsto \{(m_0,\dotsc,m_{n-1}) \mid \Transsys, \alpha[X \mapsto M, Y_0 \mapsto m_0, \dotsc, Y_{n-1} \mapsto m_{n-1}] \models \varphi\}.\]
When constructing the sequence $\emptyset, f(\emptyset), f^2(\emptyset),\dotsc$,
this sequence can converge, i.e.\ $f^{k+1}(\emptyset) = f^k(\emptyset)$ for some $k \ge 0$. Otherwise it diverges, i.e.\ each element in this sequence is different
to the previous one. On finite structures this amounts to entering a cycle. The term $\PFP(X,Y_0,\dotsc,Y_{n-1}, \varphi)$ denotes the point of convergence in the first 
case, also known as a fixpoint, and $\emptyset$ in the divergence case. Hence, the iteration can reach a fixpoint naturally, or it can artificially be given a default
value. This is why $\PFP$ is called a \emph{partial fixpoint} operator. Note that we suppress $\Transsys$ and $\alpha$ here since they will always be clear from context.
The mutual recursion between $\models$ and $\PFP(\dots)$ is well-defined since it is always strictly descending
in the formula in question.

\begin{example}
\label{ex:pfp}
Let $\M{A} = (Q, \delta, I, F)$ be a one-letter NFA that is total, i.e., for all $q \in Q$, there is $q'$
such that $(q,a,q') \in \delta$. Then $\M{A}$ can be seen as an LTS with one action $a$ and two propositions $i, f$, induced by $I$ and $F$. 
Consider the \FOPFP\ formula $\mathsf{step}$ with free variables $X$ of type $(\indtype)$ and $Y$ of type $\indtype$, given by
\begin{align*}
\mathsf{step}(X, Y) =\ & \quad \big(i(Y) \wedge \forall Y'.\ \neg X(Y')\big) \\
                      & \vee \big(  X(Y) \wedge \exists Y'_1.\ X(Y'_1) \wedge \forall Y'_2. (X(Y'_2) \rightarrow \neg f(Y'_2))\big)
											 \\
											& \vee \big(\exists Y'_1.\ a(Y'_1, Y) \wedge X(Y'_1) \wedge \exists Y'_2.\ X(Y'_2) \wedge f(Y'_2)\big).
\end{align*}
On $\M{A}$, given some set $M$ it returns the set $M' = \{\state \mid \M{A}, [X\mapsto M, Y \mapsto \state \models \mathsf{step}\}$. 
This set can be thought of as being defined via a case distinction:
\begin{itemize}
\item If $M$ is empty, then $M'$ is the set if states that satisfy $i$.
\item If $M$ is nonempty and contains no state that does satisfy $f$, then $M' = M$.
\item If $M$ is nonempty and some state in $M$ satisfies $f$, then $M'$ is the set of $a$-successors of the
states in $M$.
\end{itemize}
Clearly, $\mathsf{step}$ is not monotonic in general.
Now consider the formula
\[
\varphi_{\mathsf{nonuniv}} \coloneqq \exists Z.\ (\pfp(X,Y).\  \mathsf{step})(Z) \wedge \forall Z.\ (\pfp(X,Y)\ \mathsf{step})(Z) \rightarrow \neg f(Z).
\]
It is true iff $\M{A}$ is non-universal, i.e., if there is $k$ such that $a^k$ is not accepted by $\M{A}$.
This can be seen as follows: the set $X$ of the partial fixpoint definition contains the states reachable by
$a^0, a^1, \dots$ in sequence, as long as at least one of them is accepting. 
The first case above initializes $X$ with the set of initial states after 
the first iteration. Since $\M{A}$ is total, the set never runs empty afterwards. 
Then the third clause keeps passing to successors as long as the previous set contains
at least one final state. Should the set in $X$ contain only non-final states, the second clause
fixes this set forever, reaching a partial fixpoint. If this never happens, there are two cases:
either the sequence also reaches a partial fixpoint at some point, or it never stabilizes. In the latter
case, the partial fixpoint is the empty set. Since $\varphi_{\mathsf{nonuniv}}$  checks for a nonempty partial
fixpoint that contains only non-final states, it is satisfied iff $\M{A}$ is non-universal\footnote{
We should note that the problem of one-letter NFA non-universality is known to be NP-complete, while
\FOPFP\ captures PSPACE over ordered structures. Hence, this example
illustrates typical patterns in conjunction with partial fixpoints, but does not fully exhaust the 
expressive power of \FOPFP.}.
\end{example}

\paragraph*{Least Fixpoints.}

If, in a formula of the form $ \big(\pfp(X,Y_0,\dotsc,Y_{n-1}).\ \varphi\big)(Z_0,\dotsc,Z_{n-1})$,
the fixpoint variable $X$ occurs only under even numbers of negations, and $\varphi$ contains no further
fixpoint definitions, then the function
\[f \colon M \mapsto \{(m_0,\dotsc,m_{n-1}) \mid \Transsys, \alpha[X \mapsto M, Y_0 \mapsto m_0, \dotsc, Y_{n-1} \mapsto m_{m-1}] \models \varphi\}\]
defined by $\varphi$ is monotone, and, hence, always reaches a fixpoint by the Knaster-Tarski and Kleene 
fixpoint theorems. Moreover, this is also the case if $\varphi$ contains further fixpoint definitions, but
these are monotonic due to restricted occurrences of \emph{their} fixpoint variable. Formulas of
this kind form \emph{Higher-Order Logic with Least Fixpoints} (\HOLLFP), which is also
stratified w.r.t.\ type-theoretic order, just as \HOPFP.

\subsection{Descriptive Complexity}

We recall some standard definitions from descriptive complexity theory. A \emph{query} (of dimension $d$) is 
a set $Q$ of pairs $(\Transsys,(s_0,\ldots,s_{\dimension-1}))$ s.t.\ each $\Transsys$ is finite. It is 
expressed by the \HOPFP formula 
$\varphi$ with free variables $X_0,\dotsc,X_{\dimension-1}$, if $Q = \{(\Transsys, (\state_0,\dotsc,\state_{\dimension-1})) \mid \Transsys, [X_0 \mapsto \state_0,\dotsc, X_{\dimension-1}\mapsto \state_{\dimension-1}] \models \varphi \}$.

Let $k \ge 0$. The complexity class $k$-\ExpSpace is defined as $\mathit{DSPACE}(2^{n^{\mathcal{O}(1)}}_k)$ where 
$2^m_0 := m$ and $2^m_{k+1} := 2^{2^{m}_k}$. Note that $0$-\ExpSpace equals $\PSpace$.

For a complexity class $\mathcal{C}$, a $\mathcal{C}$-query is one that can be decided within the resource bounds given by
$\mathcal{C}$. We write \bisim{$\mathcal{C}$} for the complexity class of $\mathcal{C}$-queries that are bisimulation-invariant. 

We say that a logic $\mathcal{L}$ \emph{captures} a complexity class $\mathcal{C}$ if the model checking problem for $\mathcal{L}$
is in $\mathcal{C}$ and each $\mathcal{C}$-query can be expressed in $\mathcal{L}$. 
 
The \PSpace-queries 
\emph{over ordered structures} are those expressible in $\hopfp{1}$, resp.\ first-order logic with partial fixpoints \cite{DBLP:conf/stoc/Vardi82}. This can be generalised as follows.

\begin{proposition}[\cite{DBLP:journals/corr/abs-2511-02596}]
\label{prop:IV-gen}
For each $k \ge 0$, $\hopfp{k+1}$ captures $k$-\ExpSpace over ordered structures.
\end{proposition}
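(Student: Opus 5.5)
We prove the two halves of the capturing statement separately. For each $k\ge 0$ we show that model checking a fixed $\hopfp{k+1}$ formula over ordered structures is in $k$-\ExpSpace, and that every $k$-\ExpSpace query over ordered structures is expressible in $\hopfp{k+1}$. For $k=0$ this is Vardi's theorem for \FOPFP, and the general argument lifts it along the type hierarchy.

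\emph{Upper bound.} Fix a formula $\varphi$ and a structure with $n$ states, and take $\ordho{\indtype}=0$, so that sets of individuals have order $1$. An object of HO type of order $j\ge 1$ is a subset of a product of at most $c$ domains of order $<j$, where $c$ depends on $\varphi$. By induction on $j$ it can therefore be written as a bit vector of length at most $2^{n^{c'}}_{j-1}$, and individuals need $\log n$ bits.

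We evaluate $\varphi$ recursively, storing one current value per bound variable:
\begin{itemize}
\item $\exists X$ is handled by enumerating all candidate values for $X$ in place.
\item $(\pfp(X,\bar Y).\,\psi)(\bar Z)$ is handled by iterating $f$. We keep the current stage $M$ and compute $f(M)$ bit by bit through recursive calls. We detect convergence by comparing with the previous stage, and detect divergence with a counter up to the number of possible values of $X$.
\end{itemize}
That number is at most $2^{2^{n^{c'}}_{k}}$, so the counter needs only $2^{n^{c'}}_k$ bits, the same as $X$ itself. The recursion depth is bounded by $|\varphi|$, hence the total space is $O(|\varphi|\cdot 2^{n^{c'}}_k)$, which lies in $k$-\ExpSpace.

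\emph{Lower bound.} Let $M$ be a machine deciding the query in space $2^{n^{c}}_k$. The plan has three steps.
\begin{enumerate}
\item \textbf{Address space.} Using the given order on individuals, we define in $\hopfp{k+1}$ a linear order $\prec_j$, together with successor, minimum and maximum, on objects of a fixed type $\tau'_j$ of order $j$, for each $j\le k$.
\begin{itemize}
\item For $j=0$ we take $c$-tuples ordered lexicographically, giving $n^c$ elements.
\item For $j+1$ we order subsets of a type $\tau'_j$ by comparing characteristic vectors lexicographically: $A\prec B$ iff at the $\prec_j$-largest element where they differ, $B$ contains it. This is first-order definable from $\prec_j$, and we obtain successor similarly through carry propagation.
\end{itemize}
Thus objects of type $\tau'_k$ form an ordered index set of size at least $2^{n^c}_k$ and serve as tape addresses.

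\item \textbf{Configurations.} A configuration becomes a relation $X$ of order $k+1$. It consists of tuples $(A,\bar s)$ with $A$ of type $\tau'_k$, where $\bar s$ encodes, using constants of the ordered structure, the tape symbol at $A$, whether the head is at $A$, and the control state. The input is encoded in the first $n^c$ cells via the order on individuals.

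\item \textbf{The simulation step.} We write a formula $\mathsf{step}(X,\bar Y)$ that returns the initial configuration when $X=\emptyset$, returns $X$ unchanged when $X$ is accepting, and otherwise returns the successor configuration. The successor is defined cell-wise from the neighbouring addresses via the successor relation of step 1. The query is then $\exists \bar Z.\,(\pfp(X,\bar Y).\,\mathsf{step})(\bar Z)$, conjoined with a check that the fixpoint is accepting. If $M$ accepts, the iteration freezes at a nonempty accepting configuration. If $M$ rejects, it stops in a non-accepting configuration, again stepping into the unchanged case or cycling; a rejecting run yields either $\emptyset$ or a non-accepting fixpoint. Exactly as in Example~\ref{ex:pfp}, the computation therefore accepts iff the partial fixpoint is nonempty and accepting.
\end{enumerate}

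\emph{Main obstacle.} We expect the main difficulty to be step 1. The order and successor on order-$k$ objects must be defined while staying within order $k+1$ and without any unbounded quantification over higher types. Each level's definition quantifies only over elements of type $\tau'_{j}$ and uses $\prec_j$ as a previously defined formula, so the order of the definition grows by one per level, and a careful inductive bookkeeping of the variable types is needed. If some definition needs a recursion, for instance for the carry in the successor, we would realise it with an inner \pfp\ over an order-$(j+1)$ variable, which stays within the bound since $j+1\le k+1$.
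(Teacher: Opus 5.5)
Your proposal is correct. The paper gives no proof of this statement and only imports it from the cited work, and your argument is the standard Vardi-style proof that this result generalises: for the upper bound you iterate the stage map with a cycle-detecting counter of the same bit-length as the pfp variable, and for the lower bound you store machine configurations in a pfp relation one order above a binary-lexicographically ordered address type $\tau'_k$. The details you gloss over are routine: your count $\ordho{\indtype}=0$ is shifted by one against the paper's (which puts individuals at order $1$ and identifies $\hopfp{1}$ with \FOPFP), harmlessly so since your lower bound quantifies only over the types $\tau'_j$ with $j\le k$ while your upper bound even tolerates quantification at the pfp order; and for $k\ge 1$, placing the input in the first $n^c$ cells needs an explicit order-isomorphism between $c$-tuples and an initial segment of $(\tau'_k,\prec_k)$, obtainable by an inner monotone $\pfp$ over a relation of type $(\indtype,\dotsc,\indtype,\tau'_k)$, or avoidable with a read-only input tape whose head position is a $c$-tuple component of the configuration.
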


The ordering is only important for the case of $k=0$ as such an ordering can be defined in second-order logic, i.e.\ as soon as
$k \ge 1$.
The first capturing result for a bisimulation-invariant class is given by Otto's Theorem.

\begin{proposition}[\cite{Otto99}]
\label{prop:otto}
The polyadic modal $\mu$-calculus \polymucalc, or, equivalently $\phfl{0}{}$, captures \bisim{\PTime},
or equivalently, \bisim{$0$-\ExpTime}.
\end{proposition}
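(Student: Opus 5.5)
The plan is to prove the two halves of the capturing statement separately: the model checking upper bound, and the expressibility of every \bisim{\PTime}-query in \polymucalc. Bisimulation-invariance of the defined queries is already given by Proposition~\ref{prop:bisiminv}.

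\emph{Upper bound.} Fix a closed formula $\varphi$ of $\phfl{0}{d}$. Every subformula then denotes a subset of $\States^d$, which has at most $|\States|^d$ elements. A least or greatest fixpoint of type $\grtype$ can be computed by Kleene iteration in at most $|\States|^d+1$ steps. Nested fixpoints are evaluated naively by recursion, so the total running time is $\mathcal{O}(|\States|^{d\cdot(a+1)})$, where $a$ is the fixpoint nesting depth of $\varphi$. Since $d$ and $a$ are constants for a fixed formula, this is polynomial, and the query defined by $\varphi$ lies in \bisim{\PTime}.

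\emph{Lower bound, reduction to the ordered case.} Let $Q$ be a bisimulation-invariant \PTime-query of dimension $d$. Because of bisimulation-invariance, $Q$ is determined by the bisimulation quotient of the part of $\Transsys$ reachable from $s_0,\dots,s_{d-1}$. On that quotient we will construct a definable linear order, and then apply the Immerman--Vardi Theorem. This yields an \FOLFP\ sentence $\psi$, using the order, that defines $Q$ on ordered quotients. We then translate $\psi$ inductively into \polymucalc, with one tuple position for every first-order variable plus the $d$ anchor positions. The translation goes as follows:
\begin{itemize}
\item equality $x=y$ becomes $\varphi_\sim^{i,j}$;
\item $a(x,y)$ becomes ``some $a$-successor of position $i$ is bisimilar to position $j$''; this is expressed by copying position $i$ into a fresh position with $\subst{\sigma}{\cdot}$, applying $\mudiam{a}{\cdot}$ to that copy, and testing $\varphi_\sim$ against position $j$;
\item $\exists x$ becomes a least fixpoint that first copies an anchor into the position of $x$ via $\subst{\sigma}{\cdot}$ and then walks along $\mudiam{a}{\cdot}$-edges; this ranges exactly over the reachable states;
\item an $r$-ary least fixpoint of \FOLFP\ becomes a $\mu$ over $d$-tuples in which $r$ positions hold the relation's arguments.
\end{itemize}
Positivity is preserved throughout, so the result is well-typed. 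Since every translated atom is invariant under replacing elements by bisimilar ones, quantifying over states is the same as quantifying over classes of the quotient.

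\emph{Main obstacle: the order.} The hardest step is defining a linear preorder $x \preceq y$ whose induced equivalence is exactly $\sim$, uniformly by a $\phfl{0}{}$ formula of fixed dimension. My first attempt is ordered partition refinement. Start from the finitely many proposition signatures, ordered lexicographically by a fixed enumeration of $\Prop$. In each round, split every block according to the ordered sequence of blocks reachable via each $a\in\Actions$, and order the new blocks by comparing these successor-sets lexicographically. These successor-set comparisons are themselves first-order definable from the current preorder. The refinement converges to $\sim$ after polynomially many rounds. It is naturally an inflationary fixpoint with a stage counter. I would simulate it by a least fixpoint of higher arity, following the standard IFP-to-LFP simulation, which encodes stages by tuples. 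The subtle point is that this simulation itself needs no pre-existing order, which is why it can be carried out inside \polymucalc with the extra dimensions it requires.
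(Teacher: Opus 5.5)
Your proposal is correct and has the same architecture as the paper's sketch. The upper bound is by fixpoint iteration. The lower bound combines a \polymucalc-definable linear preorder whose equivalence is $\sim$, Immerman--Vardi on the ordered quotient, and Otto's translation: equality as $\varphi_\sim^{i,j}$, $a(x,y)$ via a modality and $\varphi_\sim$, quantifiers as reachability from the anchors, and fixpoints of \FOLFP\ as $\mu$ over tuples.

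You genuinely differ only in the step you single out as the main obstacle. The paper, citing Otto, obtains the order by modifying $\varphi_\sim$ directly inside \polymucalc. This yields an explicit formula $\varphi_<$, which is reused in Section~4. You instead define the lexicographic partition refinement as an inflationary fixpoint over the quotient, convert it to \FOLFP\ via the Gurevich--Shelah theorem, and push it through your generic translation.

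Your route is sound and more modular. Once the translation is verified for arbitrary \FOLFP\ formulas read over the quotient, the order costs nothing extra, and its correctness is a purely combinatorial fact about refinement. The price is using IFP$=$LFP as a black box, which raises the arity and gives no explicit $\varphi_<$.

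Two points should be stated precisely. First, the refinement must use the old block as the primary sort key. Then the strict order $\prec$ only grows, and the process is the inflationary fixpoint of a single first-order formula in $\prec$, with $\approx$ read off as incomparability. Second, no stage counter is involved, and none could be maintained without an order. Gurevich--Shelah works with the stage-comparison relations of that fixpoint instead.
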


We give a very brief sketch of the proof, since we re-use the general pattern. The upper bound, i.e.\
that model checking for \polymucalc is in $\PTime$, is by verification. Bisimulation-invariance
is by Prop.~\ref{prop:bisiminv}.

The lower bound is obtained by three observations: (\textsc{i}) the formula $\varphi_\sim$ that 
expresses bisimilarity of two states can be modified to turn \emph{non-bisimilarity} into a total
order of the bisimulation classes, whence w.l.o.g.\ the structures over which $\phfl{}{}$, and, 
hence, \polymucalc, is defined, are always totally ordered, and this order is \polymucalc-definable. 
Moreover, (\textsc{ii}) on totally
ordered structures, the Immerman-Vardi Theorem \cite{DBLP:conf/stoc/Vardi82,DBLP:journals/iandc/Immerman86} posits 
that the polynomial-time queries are exactly those expressed by first-order logic with least fixpoints. 
Finally, (\textsc{iii}) \emph{bisimulation-invariant} first-order logic with least fixpoints can
be translated into \polymucalc. The main difficulty here is to deal with first-order quantification,
which has no natural counterpart in the modal world. This can be replaced by reachability in the
bisimulation-invariant setting. It follows that every bisimulation-invariant $\PTime$ query
can be expressed in $\polymucalc$.

Using reasoning along similar lines, Otto's Theorem has also been generalised to the full exponential-time hierarchy.

\begin{proposition}[\cite{Otto99,conf/ifipTCS/LangeL14,DBLP:journals/corr/abs-2209-10311}]
\label{prop:capturebisimexptime}
For each $k \ge 0$, $\phfl{k}{}$ captures \bisim{k-\ExpTime}.
\end{proposition}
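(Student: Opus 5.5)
The statement is a cited result, so the plan is to reassemble it from the literature by splitting it into an upper bound and a lower bound. The upper bound says that model checking a fixed closed $\phfl{k}{d}$ formula of type $\grtype$ is in $k$-\ExpTime. The lower bound says that every bisimulation-invariant $k$-\ExpTime query is expressible in $\phfl{k}{}$. Bisimulation-invariance of the defined queries is Prop.~\ref{prop:bisiminv}, so the upper bound is only about complexity.

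\textbf{Upper bound.} Over an LTS with $n$ states, $\semm{\grtype}{\Transsys}$ has $2^{n^d}$ elements. By induction on types, an object of order-$j$ type has a table representation of size at most $2^{n^{\mathcal{O}(1)}}_{j+1}$ bits. The lattice $\semm{\tau}{\Transsys}$ has height bounded by its size, so every fixpoint of order-$j$ type is reached by Kleene iteration within $2^{n^{\mathcal{O}(1)}}_{j+1}$ steps. Each step evaluates the body as a function table, which costs time polynomial in the table sizes. The largest tables have order $k$, so their size is at most $2^{n^{\mathcal{O}(1)}}_{k+1}$, which looks like one exponential too many. The standard remedy is the following: fixpoints of order-$k$ type are never tabulated as a whole. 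Instead, one evaluates applications $F(\psi_1,\dots,\psi_m)$ with arguments of order at most $k-1$ by a memoised recursive evaluation, in the style of Axelsson–Lange–Somla for HFL. The number of distinct argument tuples is $2^{n^{\mathcal{O}(1)}}_{k}$, and the total work is polynomial in that, which gives $k$-\ExpTime. I expect this refined counting to be the delicate point of the upper bound.

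\textbf{Lower bound.} This follows Otto's three steps.
\begin{enumerate}
\item Using $\varphi_\sim$, pass to the bisimulation quotient, on which a $\polymucalc$-definable total preorder of bisimulation classes exists.
\item For $k\ge1$, invoke Freire–Martins: the order-$(k+1)$ fragment of \HOLLFP captures $k$-\ExpTime over ordered structures, and for $k=0$ use Immerman–Vardi. It follows that any bisimulation-invariant $k$-\ExpTime query is expressed by an $\hollfp{k+1}$ formula on the quotient.
\item Translate $\hollfp{k+1}$ into $\phfl{k}{}$.
\end{enumerate}

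In the translation, individual variables become tuple positions (indices) of the polyadic semantics. First-order quantification $\exists X$ becomes reachability to a state bisimilar to some state, expressed with $\mudiam{a}{i}$ and fixpoints. Relations of HO-order $j+1$ become \PHFL functions of order $j$: a relation over individuals is a set of tuples, i.e.\ type $\grtype$, and higher relations become characteristic predicates taking lower-order arguments. Least fixpoints translate to $\mu$ at the corresponding type.

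The main obstacle is higher-order quantification $\exists X$ for $X$ of HO-order at least $2$, which has no direct \PHFL counterpart. I would first try to eliminate it by replacing quantification with an iteration over all objects of that type in the order induced by the definable total order. This is a fixpoint of type one order higher that enumerates candidates successor-by-successor, and the $\ordhfl{}$ shift by one absorbs the increase. Well-typedness with respect to variances, in particular negation inside $\mu$, needs care. I would handle it by putting the HO formula into a normal form with a single outermost inflationary fixpoint, which is equivalent to a least fixpoint, and negations only on closed subformulas.
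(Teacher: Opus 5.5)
Your decomposition is the one the paper attributes to the cited works: a model-checking bound plus Prop.~\ref{prop:bisiminv} for the upper bound, and, for the lower bound, Otto's definable order on bisimulation classes, Immerman--Vardi resp.\ Freire--Martins, and a translation using reachability for individual quantifiers and order-driven iteration for higher-order ones. Two steps, however, do not hold up as written.

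\textbf{Upper bound.} Your count is off by one exponential, and the ``delicate point'' is an artefact of that. An object of an order-$j$ type has a table of $2^{n^{\mathcal{O}(1)}}_j$ bits; a ground-type object has $n^d$ bits. It is the \emph{number} of such objects that is $2^{n^{\mathcal{O}(1)}}_{j+1}$. Likewise, Kleene iteration is bounded by the height of the lattice, not its cardinality. We have $h(\grtype)=n^d$ and $h(\tau_1\to\tau_2)\le|\semm{\tau_1}{\Transsys}|\cdot h(\tau_2)$, so every fixpoint of order at most $k$ stabilises within $2^{n^{\mathcal{O}(1)}}_k$ rounds. With these bounds, plain bottom-up evaluation with full tables is already in $k$-\ExpTime for fixed $\varphi$: nested iterations multiply to $(2^{n^{\mathcal{O}(1)}}_k)^{\mathcal{O}(|\varphi|)}$ rounds. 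Under your estimate `height $\le$ cardinality', memoisation would not rescue the argument. It restricts which entries are computed, but what must be bounded is how often entries change. The bound you need is at most $n^d$ increases for each of the $2^{n^{\mathcal{O}(1)}}_k$ entries, and that is exactly the height bound.

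\textbf{Lower bound.} Drop the normal-form detour; it is unjustified and unnecessary. An inflationary fixpoint of a non-monotone body is in general not a least fixpoint. That IFP and LFP are equally expressive is the Gurevich--Shelah theorem for first-order logic, and the cited works provide no higher-order version. Nor is the detour needed, since \PHFL's variance typing lets negation flip variances. Consider an \HOLLFP fixpoint variable occurring under an even number of negations, and, as monotonicity requires, only as a relation symbol; it becomes a positively occurring \PHFL fixpoint variable. The only applications of variance-$0$ functions in the translation are $F(\nxt(x))$ in the quantifier iteration and atoms $y(\bar z)$, and their arguments contain no fixpoint variables.

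The obligation you actually skip is a different one. The \PHFL $\mu$ ranges over the full function lattice, including non-good arguments, so you must show that on good arguments it coincides with $\SEMTRANS$ of the \HOLLFP fixpoint. This holds because the iteration only produces $\sim$-closed good objects and the translated body applies $x$ only to such objects. Hence the Kleene approximants restricted to good inputs are exactly the encoded \HOLLFP approximants.

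Finally, fix the indexing. With the paper's convention that individuals have HO-order $1$, an HO-object of order $j$ becomes a \PHFL object of order $j-2$ (cf.\ $\tau'_{w,j}\mapsto\tau^\circ_{w,j-2}$), and iterating over it costs one order. That is why $\hollfp{k+1}$ lands in $\phfl{k}{}$. Your ``HO-order $j+1$ to \PHFL order $j$'' contradicts your own example that relations over individuals become $\grtype$.
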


\section{Upper Bounds}
\label{sec:upper}
 
The upper bound of the capturing of levels of the bisimulation-invariant exponential-space
hierarchy is established by showing
that all queries definable in tail-recursive $\phfl{k+1}{d}$ can be evaluated in $k$-\ExpSpace for any $d > 0$. 
We do so by extending the reduction of the model-checking problem for a 
polyadic logic to that of its monadic fragment, cf.\ \cite{LL-FICS12, DBLP:journals/corr/abs-2209-10311}. 
Note that tail-recursive $\phfl{k}{1}$ equals tail-recursive $\hfl{k}$ i.e.\ non-polyadic Higher-Order Fixpoint 
Logic of formulas of type order at most $k$. The complexity of model checking these fragments is known. 

\begin{proposition}[\cite{conf/rp/BruseLL17}]
\label{prop:hflmodelcheck}
Let $k \ge 0$. Model checking for tail-recursive $\hfl{k+1}$ is $k$-\ExpSpace-complete.
\end{proposition}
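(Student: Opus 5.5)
The upper bound comes from a space-bounded evaluation procedure that exploits the two restrictions of Def.~\ref{def:tailrec}. The lower bound comes from a direct encoding of space-bounded Turing machines.

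\textbf{Upper bound: what has to be stored.} Fix a finite LTS $\Transsys$ with $n$ states and a closed tail-recursive $\hfl{k+1}$ formula $\varphi$. First I bound the size of semantic objects. An element of $\semm{\tau}{\Transsys}$ with $\ordhfl{\tau}=j$ is a finite table of size $2^{p(n)}_j$, for a polynomial $p$ depending only on $\tau$. Since $\varphi$ has order $k+1$, every operand $\psi_2$ in an application $\psi_1\,\psi_2$ has a type of order at most $k$. Its value is therefore representable in space $2^{p(n)}_k$.

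\textbf{Upper bound: the procedure.} Tail recursion guarantees that such operands contain no free fixpoint variables. They can thus be evaluated eagerly, by a recursive call, once the values of the lambda variables they mention are known. Likewise, negations, boxes and the closed conjunct of a conjunction are closed w.r.t.\ fixpoint variables, and I handle them the same way by recursive calls. Because the nesting of such closed subformulas is bounded by $|\varphi|$, the recursion depth is constant and each frame needs $2^{p(n)}_k$ space.

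Within one frame the evaluation is a single nondeterministic walk on configurations. A configuration consists of the current subformula, the current tuple of states, and an environment binding the lambda variables to stored closed values. Disjunctions and diamonds are resolved by guessing. Applications push the (closed) argument value into the environment rather than onto a stack. A fixpoint variable is handled by unfolding its definition. This is exactly where tail recursion pays off: since a recursive occurrence is never an operand, we never need to return to the caller, so no call stack is kept.

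\textbf{Upper bound: fixpoints and determinisation.} The number of distinct configurations is at most $2^{q(n)}_{k+1}$. Hence a least-fixpoint computation that is accepted is accepted within that many unfoldings. A counter of $2^{q(n)}_k$ bits suffices to cut off longer runs, which enforces well-foundedness of $\mu$-unfoldings. Greatest fixpoints, and alternation of $\mu$ and $\nu$, I would treat by dualisation: a closed $\nu$-subformula is evaluated as the complement of the dual $\mu$-formula. This relies on $\mathit{NSPACE}(2^{\mathrm{poly}}_k)$ being closed under complement (Immerman--Szelepcs\'enyi) and equal to the deterministic class (Savitch). Non-closed $\nu$-subformulas along the tail path need a priority-style argument on the walk; I would adapt the standard reduction of $\nu$ to reachability of a repeated configuration, which is detectable with the same counter. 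Altogether this gives membership in $k$-\ExpSpace.

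\textbf{Lower bound.} I reduce the word problem of a deterministic Turing machine $M$ with space bound $2^{n}_k$. Tape positions are represented by objects of order $k-1$, or by bit-vectors over a chain LTS of length $n$ when $k=0$. Standard HFL counter constructions let one define, by closed formulas of order $\le k$, functions that test equality and successor of such positions. A tape content is then a function of order $k$ from positions to symbols.

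The simulation is a single tail-recursive fixpoint
\[
\mu (F\colon \tau_{k}\to\grtype).\ \lambda (c\colon\tau_k).\ \mathit{acc}(c) \vee F(\nxt(c)),
\]
where $\nxt$ is a closed order-$k$ formula computing the successor configuration cell-wise. Since $F$ only occurs in head position with a closed argument, the formula is tail recursive, and its order is $k+1$.

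\textbf{Main obstacle.} I expect the main difficulty to be defining $\nxt$ and the position arithmetic at order exactly $k$ without using fixpoint variables in operands. The first thing I would try is to build these functions purely by $\lambda$-abstraction over closed counter formulas. A secondary risk is the handling of $\nu$ in the upper bound.
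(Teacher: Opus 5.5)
The paper does not prove this proposition. It imports it from \cite{conf/rp/BruseLL17} and only hints at the idea, namely that tail calls need no call stack. Your plan follows that idea, and the lower-bound half is sound. The obstacle you name is milder than you fear: operands only need to be fixpoint-\emph{closed}, not fixpoint-free. So successor, boundary and equality tests on order-$(k-1)$ counters may use closed iterations of the kind used for $\forall^{w,k}$ and $\nxt^{w,k}$ in Lemma~\ref{lem:quantifier-general}. Note also that your $\nxt$ has type $\tau_k\to\tau_k$, which is of order $k+1$ rather than $k$; this is still within budget.

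\textbf{The step that fails.} In the upper bound, your treatment of $\nu$ by dualisation does not work, because the dual of a tail-recursive formula is in general not tail recursive. For instance, $\nu X.\,(p\wedge\mudiam{a}{}X)$ is tail recursive, but its dual $\mu X.\,(\neg p\vee\mubox{a}{}X)$ puts a box over a formula with free $X$. So $\emptyset\tail\mubox{a}{}X$ is not derivable, and your walk, which has no universal branching, cannot evaluate it. Complementing a procedure for the $\nu$-formula instead is circular, since that procedure is exactly what is missing.

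\textbf{The repair.} Drop dualisation and treat all fixpoints on the tail uniformly. The walk is a one-player parity game on configurations, each consisting of the subformula, the state, the lambda environment, and a bounded list of pending argument values. Acceptance means either a finite accepting run, or a lasso whose cycle has a $\nu$-variable as its outermost unfolded fixpoint variable. Checking this requires guessing and storing the loop configuration ($2^{p(n)}_k$ bits) and recording the outermost variable unfolded on the cycle. Your counter alone only certifies that \emph{some} configuration repeats, and that decides acceptance only when all fixpoint variables unfolded along the tail are of the same kind. An alternative is one approximant counter per fixpoint variable, reinitialised whenever its binder is entered; reject when a $\mu$-counter runs out and accept when a $\nu$-counter runs out.

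\textbf{What still needs an argument.} That this game captures the semantics is the real core of the upper bound, so you should prove it. One way is to read $F\colon\tau_1\to\dotsb\to\tau_m\to\grtype$ as a predicate on $\States\times\semm{\tau_1}{\Transsys}\times\dotsb\times\semm{\tau_m}{\Transsys}$. Tail recursion makes the defining equations existential in the fixpoint variables, with fixpoint-closed side conditions. Kleene iteration in the function lattice stays inside that lattice and agrees with iteration on these predicates.

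\textbf{Composing the levels.} In combined complexity the nesting depth of recursive calls is up to $|\varphi|$, not a constant. You must therefore compose the levels so that space adds up. For example, apply Savitch to each level's own configuration graph and query the deterministic lower level as an oracle. Applying Savitch or Immerman--Szelepcs\'enyi to the whole nested machine at every level instead compounds the blow-up from level to level.
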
 
The corresponding result for tail-recursive $\phfl{k}{d}$ follows
via a polynomial reduction.

\begin{theorem}
\label{thm:upper}
Let $k \geq 0,d \ge 1$. The model checking problem for $\phfl{k+1}{d}$ is in $k$-\ExpSpace.
\end{theorem}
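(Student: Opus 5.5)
Note first that the statement as printed omits ``tail-recursive''. Without that restriction it would contradict Prop.~\ref{prop:capturebisimexptime}: $\phfl{1}{}$ captures \bisim{\ExpTime}, whereas the case $k=0$ would place its model checking in \PSpace. So I read Theorem~\ref{thm:upper} as referring to tail-recursive $\phfl{k+1}{d}$, as the surrounding text indicates. The plan is a polynomial-time reduction from model checking tail-recursive $\phfl{k+1}{d}$ to model checking tail-recursive $\hfl{k+1}$, after which Prop.~\ref{prop:hflmodelcheck} gives membership in $k$-\ExpSpace. Since $k$-\ExpSpace is closed under polynomial-time reductions, that suffices.

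Given an LTS $\Transsys = (\States, \{\Transition{}{a}{}\}_{a\in\Actions},\Label)$, a $d$-tuple $\bar{s}$ and a formula $\varphi$, I build the product LTS $\Transsys^d$. Its states are $\States^d$. Its propositions are $p_i$ for $p \in \Prop$ and $i<d$, with $p_i \in \Label^d(\bar t)$ iff $p \in \Label(\bar t[i])$. Its actions are $a_i$ for $a\in\Actions$, $i<d$, with $\bar t \xrightarrow{a_i} \bar t[t'/i]$ whenever $\bar t[i] \xrightarrow{a} t'$. It also has actions $\sigma$ for each of the finitely many maps $\sigma\colon[d]\to[d]$, with $\bar t \xrightarrow{\sigma} (\bar t[\sigma(0)],\dots,\bar t[\sigma(d-1)])$. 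The size is $|\States|^d$ states and at most $|\States|^d\cdot(|\Actions|\cdot d\cdot|\States| + d^d)$ edges, which is polynomial since $d$ is fixed by the formula.

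The formula translation $\syntrans{\cdot}$ is homomorphic. It maps $p_i$ to the proposition $p_i$, $\mudiam{a}{i}\psi$ to $\langle a_i\rangle\syntrans{\psi}$, and $\subst{\sigma}{\psi}$ to $\langle\sigma\rangle\syntrans{\psi}$; this is correct since $\sigma$-edges are deterministic and total, so the diamond and the box coincide. All other constructs, including types, are left unchanged. Types are unchanged because $\semm{\grtype}{\Transsys}$ in dimension $d$ is literally $2^{\States^d}$, which is the ground lattice of $\Transsys^d$. A routine induction over formulas and types then shows $\sem{\varphi}{\Transsys}{\eta} = \sem{\syntrans{\varphi}}{\Transsys^d}{\eta}$, hence $\Transsys,\bar s\models\varphi$ iff $\Transsys^d,\bar s\models\syntrans{\varphi}$. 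Orders do not change. Tail recursion is preserved because each rule of Fig.~\ref{fig:phfl-tail} is mapped to the corresponding rule for a modality or box, and the $\subst{\sigma}{\cdot}$ rule becomes the $\langle\cdot\rangle$ rule with the same set $\fvarsbis$.

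The main point needing care is this preservation of tail recursion together with the size bound. The box case $\mubox{a}{i}$ requires closedness in both logics, so it matches. The translation of $\subst{\sigma}{\cdot}$ must use a diamond, not a box, so that no closedness requirement is introduced. The translation is linear in $|\varphi|$, so the whole reduction is polynomial and the theorem follows from Prop.~\ref{prop:hflmodelcheck}.
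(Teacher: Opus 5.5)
Your proposal is correct and takes essentially the same route as the paper. Both build the $d$-fold product $\Transsys^d$ with actions $a_i$ and $\sigma$, translate homomorphically with $\subst{\sigma}{\psi}$ sent to a $\sigma$-diamond so that order and tail recursion are preserved, and conclude via Prop.~\ref{prop:hflmodelcheck}; reading the statement as restricted to tail-recursive formulas matches what the paper's proof actually treats. One small correction to your preliminary remark: placing all of $\phfl{1}{d}$ in \PSpace would not formally contradict Prop.~\ref{prop:capturebisimexptime}, since it would only force every \bisim{\ExpTime} query into \PSpace, an unlikely collapse but not a refuted one.
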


\begin{proof}
Let $k \geq 0,d \ge 1$ and an LTS 
$\Transsys = (\States, \{\Transition{}{a}{}\}_{a\in\Actions},\Label)$ over $\Actions$ and $\Prop$. Its $d$-product $\Transsys^d$ 
(cf.\ \cite{LL-FICS12,DBLP:journals/corr/abs-2209-10311}) over the action set 
$\Actions^d := \{a_i \mid a \in \Actions, i \in [d] \} \cup \{ \sigma \mid \sigma: [d] \to [d] \}$ and atomic propositions
$\Prop^d := \{ q_i \mid q \in \Prop, i \in [d] \}$ is $(\States^d, \{\Transition{}{x}{}\}_{x \in \Actions'_d}, \Label')$ 
where, for all $q \in \Prop$, $i \in [d]$, $a \in \Actions$, $s_0,\ldots,s_{d-1},t_0,\ldots,t_{d-1} \in \States$ we have
\begin{itemize}
\item $q_i \in \Label'(s_0,\ldots,s_{d-1})$ iff $q \in \Label(s_i)$,
\item $\Transition{(s_0,\ldots,s_{d-1})}{a_i}{(t_0,\ldots,t_{d-1})}$ iff $\Transition{s_i}{a}{t_i}$ and $t_j=s_j$ for all $j \ne i$,
\item $\Transition{(s_0,\ldots,s_{d-1})}{\sigma}{(t_0,\ldots,t_{d-1})}$ iff $t_j = \sigma(s_j)$ for all $j \in [d]$.
\end{itemize}

We now translate a tail-recursive $\phfl{k+1}{d}$ formula $\varphi$ inductively into a tail-recursive $\phfl{k+1}{1}$ formula $\widehat{\varphi}$ as follows.
The operation $\widehat{\cdot}$\ acts homomorphically on all operators apart from the following three cases.
\begin{displaymath}
\widehat{\subst{\sigma}\psi} := \mudiam{\sigma}{}\widehat{\psi} \enspace, \quad 
\widehat{\mudiam{a}{i}\psi} := \mudiam{a_i}{}\widehat{\psi} \enspace, \quad 
\widehat{p_i} := p_i \ . 
\end{displaymath}
Clearly this maps tail-recursive polyadic formulas to tail-recursive monadic formulas of the 
same order.

The above equality in syntax of $p_i$ is intended because of the
following connection: the $d$-tuple $(s_0,\ldots,s_{d-1})$ of states in $\Transsys$ satisfies the $\phfl{}{d}$ formula 
$p_i$ (formed as proposition $p$, indexed by $i$) iff the state $(s_0,\ldots,s_{d-1})$ of $\Transsys^d$ satisfies the 
$\phfl{}{1}$ proposition $p_i$.
 
From \cite{LL-FICS12,DBLP:journals/corr/abs-2209-10311} we have the following (via induction
on the syntax of tail-recursive \PHFL): for all $\phfl{}{d}$ formulas $\varphi$, all 
$s_0,\ldots,s_{d-1} \in \States$ and all environments $\eta$ we have: $(s_0,\ldots,s_{d-1}) \in \sem{\varphi}{\Transsys}{\eta}$ 
iff $(s_0,\ldots,s_{d-1}) \in \sem{\widehat{\varphi}}{\Transsys^d}{\eta}$. Hence,
the reduction is correct, and it is not hard to see that it also works in polynomial time. Thus, by Prop.~\ref{prop:hflmodelcheck}, model checking $\phfl{k+1}{d}$ is also in $k$-\ExpSpace.
\end{proof}

\section{Lower Bounds}
\label{sec:lower}

We now provide the counterpart of Thm.~\ref{thm:upper} and show that problems on level $k$ of the exponential-space hierarchy that are also bisimulation-invariant are definable in tail-recursive order-$(k+1)$ \PHFL.

\subsection{The General Idea}

Since, by Prop.~\ref{prop:IV-gen} (\cite{DBLP:conf/stoc/Vardi82,DBLP:journals/corr/abs-2511-02596}), for each $k \geq 1$ we have that
 $\hopfp{k+1}$ captures $k$-\ExpSpace, it suffices to show that each bisimulation-invariant $\hopfp{k+1}$ query can
be translated into an equivalent tail-recursive $\phfl{k+1}{}$ query in order to obtain the missing lower bound. 
This follows the pattern of Otto's Theorem in which first-order logic with least fixpoints is translated into \polymucalc \cite{Otto99}.
There, first-order variables are replaced by polyadicity, least second-order fixpoints become least fixpoints of \polymucalc, and first-order quantification
is replaced by reachability.

An extension towards higher-order logic with least fixpoints was given in \cite{conf/ifipTCS/LangeL14,DBLP:journals/corr/abs-2209-10311}.
The main difficulty is dealing with higher-order objects, which have no obvious counterpart, both for quantification and also
for general use. This requires defining a translation from types of higher-order logic to suitable counterparts for PHFL. In order to 
de-clutter notation, we make the following simplification.
Let $w\geq 2$ be fixed but arbitrary. 
Define $\tau'_{w,k}$ as $\tau'_{w,1} = \indtype$, $\tau'_{w,i+1} = (\tau'_{w,i},\dotsc,\tau'_{w,i})$ with
$w$ many repetitions of $\tau'_{w,i}$. 
\begin{lemma}
\label{lem:homo}
If $\varphi \in \hopfp{k}$ defines a query $Q$, then there is $\varphi' \in \hopfp{k}$ that defines 
the same query, but the only types used in $\varphi'$ are $\tau'_{w,0},\dotsc,\tau'_{w,k}$ for some $w$.
\end{lemma}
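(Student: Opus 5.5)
We normalise types by induction on order. The aim is to show that every type of order $j \le k$ occurring in $\varphi$ can be simulated by the uniform type $\tau'_{w,j}$, for a single $w$ chosen large enough for all types in $\varphi$. Note that $\tau'_{w,0}$ is not defined by the recursion; we read it as an alias, so the intended list is $\tau'_{w,1} = \indtype$ up to $\tau'_{w,k}$, with $\tau'_{w,j}$ of order $j$. Choose $w$ at least as large as the maximal arity of any tuple type occurring (hereditarily) in $\varphi$, plus one. The extra position will be used for padding.

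The first step is an encoding. For each type $\tau'$ of order $j$ we define an injective embedding $e_{\tau'}\colon \semm{\tau'}{\Transsys} \to \semm{\tau'_{w,j}}{\Transsys}$ by induction on $\tau'$.
\begin{itemize}
\item For $\tau' = \indtype$, let $e_{\tau'}$ be the identity.
\item For $\tau' = (\tau'_0,\dotsc,\tau'_{n-1})$ with components of order $j_i < j$, first lift each component from order $j_i$ to order $j-1$. An individual or lower-order object $m$ is lifted by repeated singleton embedding $m \mapsto \{(m,\dotsc,m)\}$, which is injective and definable. Then pad the tuple from length $n$ to length $w$ by repeating the last component.
\end{itemize}
The image of $e_{\tau'}$ is a definable subset $\mathit{Im}_{\tau'}$ of $\semm{\tau'_{w,j}}{\Transsys}$. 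We prove this by simultaneous induction: we write formulas $\mathit{Im}_{\tau'}(X)$, together with a formula $\mathit{Dec}_{\tau'}(X, Y_0,\dotsc,Y_{n-1})$ expressing that the encoded relation $X$ contains the decoded tuple $(Y_0,\dotsc,Y_{n-1})$. Each of these uses only variables of the normalised types and of order at most $j$. Singleton-ness and equality of lifted objects are first-order in the elements of the next lower type, so this stays within order $j$.

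The second step translates formulas compositionally. Replace every variable $X$ of type $\tau'$ by a variable $\hat X$ of type $\tau'_{w,\ORDHO(\tau')}$.
\begin{itemize}
\item Atoms $p(X)$ and $a(X,Y)$ are unchanged, since individuals are untouched.
\item An atom $X(Y_0,\dotsc,Y_{n-1})$ becomes $\mathit{Dec}_{\tau'}(\hat X,\hat Y_0,\dotsc,\hat Y_{n-1})$.
\item A quantifier $\exists X.\,\psi$ becomes $\exists \hat X.\,\mathit{Im}_{\tau'}(\hat X) \wedge \hat\psi$.
\end{itemize}
Correctness follows by induction from $\Transsys,\alpha\models\psi$ iff $\Transsys,\hat\alpha\models\hat\psi$, where $\hat\alpha(\hat X) = e(\alpha(X))$.

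The main obstacle is the partial fixpoint $\big(\pfp(X,\bar Y).\,\psi\big)(\bar Z)$. Here the iteration must run on encoded stages, and it must start from the encoding of $\emptyset$ and keep every stage inside the image. The plan is to translate to $\big(\pfp(\hat X,\bar U).\,\chi\big)(\bar V)$, with the following ingredients.
\begin{itemize}
\item $\bar U$ ranges over $w$-tuples of type $\tau'_{w,j-1}$.
\item $\chi$ states that $\bar U$ is an encoding tuple: its components are lifted images of some $\bar Y$, padded correctly.
\item $\chi$ also states that $\hat\psi$ holds for the decoded $\bar Y$, where $\hat X$ is read through $\mathit{Dec}$.
\end{itemize}
Since $e$ maps $\emptyset$ to $\emptyset$ and $e\circ f = \hat f\circ e$ on all of $\semm{\tau'}{\Transsys}$, the two iteration sequences correspond stage by stage. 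Hence convergence, and divergence with its default value $\emptyset$, are preserved, because $e$ is injective. The application to $\bar Z$ is then handled as in the atom case. The points to verify carefully are that this commutation holds for non-monotone $\psi$, which it does since it is pointwise, and that no auxiliary variable exceeds order $k$.
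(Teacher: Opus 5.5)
Your strategy is sound: an injective encoding $e$ with $e(\emptyset)=\emptyset$, relativised quantifiers, and a partial fixpoint run on encoded stages. The paper gives no argument of its own here and only defers to the analogous normalisation lemma for HO(LFP) in its exponential-time predecessor. Your observation that $\hat f^n(\emptyset)=e(f^n(\emptyset))$ for every $n$ is exactly what is needed for partial fixpoints: with injectivity of $e$ it transfers both convergence and the default value $\emptyset$.

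One step fails as written, however. You claim that singleton-ness and equality are expressible ``in the elements of the next lower type''. This breaks down at the bottom level: individuals have no lower type, and the \HOPFP grammar of the paper has no equality atom.

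This matters for padding by repetition of tuples of individuals. Take an order-$2$ fixpoint variable of arity $n<w$ in an $\hopfp{1}$ formula, which is the \PSpace level of the main theorem. There $\chi$ must enforce $U_n=\dotsb=U_{w-1}=U_{n-1}$ on individuals. With neither $=$ nor second-order quantification available, this cannot be stated, since equality is not definable in equality-free \FOPFP. Without it, $\hat f(\emptyset)$ need not equal $e(f(\emptyset))$, so your stage correspondence already fails at the first step.

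The repair is to pad tuples of individuals by cylindrification, $e(N)=N\times\States^{w-n}$. Then:
\begin{itemize}
\item $\chi$ needs no padding constraint at all;
\item $\mathit{Dec}(\hat X,\bar Y)$ is simply $\hat X(Y_0,\dotsc,Y_{n-1},Y_{n-1},\dotsc,Y_{n-1})$;
\item the image test for quantified order-$2$ variables is the equality-free statement that membership does not depend on the last $w-n$ coordinates;
\item $e$ stays injective with $e(\emptyset)=\emptyset$, because $\States\neq\emptyset$.
\end{itemize}

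For the singleton lift $U=\{(y,\dotsc,y)\}$ of an individual $y$, use the test $U(y,\dotsc,y)\wedge\forall U'.\,\big(U'(y,\dotsc,y)\rightarrow\forall\bar V.\,(U(\bar V)\rightarrow U'(\bar V))\big)$ with $U'$ of type $\tau'_{w,2}$. This is admissible because such lifts only arise when a type of order at least $3$ is present, so order-$2$ quantification is available.

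Above the individual level, your repetition padding and equality tests are fine as they stand. If you take $=$ to be part of the logic, as in the usual formulation of Vardi's theorem, your argument goes through unchanged.
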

This is proved in the same way as Lemma 13 in \cite{DBLP:journals/corr/abs-2209-10311} where an analogous result is stated for HO$^k$(LFP).
Hence, from now on, all $\hopfp{}$ queries use only the types introduced above.

\subsection{Type Correspondence}

Let $\Transsys, (\state_0,\dotsc,\state_{\dimension-1})$ be a pointed LTS. Clearly, it is bisimilar
to any version of $\Transsys$ where all states not reachable from one of the $\state_i$ (with $0 \leq i < \dimension$) are removed. 
Hence, in the bisimulation-invariant setting, we can always assume all states
of a given LTS to be reachable from one of the $\state_i$, if $(\state_0,\dotsc,\state_{\dimension-1})$ is
the designated input tuple.

In the following, we develop $\phfl{}{D}$ formulas for some dimension $D = 2w + d + 2$ where
$d$ is the width of the input tuple in the query in question and $w \geq d$ depends on the types used in the query to be translated (to be made precise
below). We only work with tuples of the form $(\statealt_0,\dotsc,\statealt_{D-d},\state_{0},\dotsc,\state_{\dimension-1})$, i.e., 
with tuples where the last $d$ elements contain a copy of the input tuple $(\state_0,\dotsc,\state_{\dimension-1})$. 

\begin{definition}
\label{def:good}
Let $\Transsys$ be an LTS and let $w \geq d$ and let $D = 2w+d+2$ be fixed, and let $(\state_0,\dotsc,\state_{\dimension-1})$ be
a dedicated tuple of states such that all states of $\Transsys$ are reachable from one of
the $\state_i$. We call the following \emph{good} w.r.t. $(\state_0,\dotsc,\state_{\dimension-1})$. For brevity we drop the
part that states which tuple something is good for. It will also be clear from context.
\begin{itemize}
\item A tuple $\overline{\state}$ is good if it is of the form $\overline{\state} = (\statealt_0,\dotsc,\statealt_{D-d-1},\state_{0},\dotsc,\state_{\dimension-1})$.
\item A set is good if it is of the form $M \times 2^{\States^{D-w-d}} \times \{\state_0\} \times \dotsb \times \{\state_{d-1}\}$ where $M \subseteq 2^{\States^w}$.
\item A function $f$ of type $\tau_1 \to \tau_2$ is good if, on good inputs, it returns either $\States$ or $\emptyset$.
\item An environment $\eta$ is good if it maps all variables to good sets or functions, and 
in particular if it maps the ground-type variable $\mathsf{e}$ to $\States^{D-d} \times \{\state_0\} \times \dotsb \times \{\state_{\dimension-1}\}$.
\end{itemize}
\end{definition}
The variable $\mathsf{e}$ is used in the following to restrict to good sets by intersecting with $\eta(\mathsf{e})$. 

The intuition
here is that good sets are those that are nontrivial only in the first $w$ elements, full in the next $D-w-d$ elements, and have their last $d$
components fixed to the input tuple. Finally, good functions can be seen as characteristic functions
of a set, i.e., they only return ``true'' (indicated by $\States$) or ``false'' (indicated by the empty set).

Let $\tau_{w,k}$ be inductively defined via $\tau_{w,0} = \grtype$ and 
$\tau_{w,i+1} = \tau_{w,i} \to \dotsb \to \tau_{w,i} \to \grtype$
where $\tau_{w,i}$ is repeated $w$ many times, as defined above. 
Given $\Transsys$, let $\semm{\tau^\circ_{w,0}}{\Transsys}$ be the restriction
of $\semm{\tau^\circ_{w,i}}{\Transsys}$ to good sets, resp.\ functions.

We now want to emulate the type $\tau'_{w,k}$ with $k>1$ by $\tau^\circ_{w,k-2}$. The type $\tau'_{w,1}$ will
be handled by polyadicity as in \cite{Otto99}. 
Let $\Transsys$ be an LTS with state set $\States$, let $(\state_0,\dotsc,\state_{d-1})$ be an input tuple
such that all states are reachable from one of the $\state_i$, and let $D = 2w + d + 2$ for some $w \geq 1$. For each $k \geq 2$, we define a translation
$\SEMTRANS_k^{\Transsys}\colon\semm{\tau'_{w,k}}{\Transsys} \to \semm{\tau^\circ_{w,k-2}}{\Transsys}$ via 
$\semtrans{M}{\Transsys}{2} = M \times \States^{D-w-d} \times \{\state_0\} \times  \dotsb \times \{\state_{d-1}\}$ and, for $i \geq 2$, 
$\semtrans{M}{\Transsys}{i+1} = f \in \semm{\tau_{w,i-2}}{\Transsys}$ s.t.\ 
\begin{align*}
f(f_1,\dotsc,f_n) = \left\{
\begin{aligned}
\States, &\text{ if } f_j = \semtrans{M_i}{\Transsys}{i} \text{ f.a. } 0 \leq j \leq w \text{ and } (M_0,\dotsc,M_{w-1}) \in M \\
\emptyset, &\text{ otherwise.}
\end{aligned}\right.
\end{align*}
Hence, a variable assignment $\alpha$ such that all variables are of types $\tau'_{w,2},\dotsc,\tau'_{w,k}$, induces an
environment $\eta_\alpha$ with definitions for all those variables of order $\geq 2$ via $\eta_\alpha(x) = \semtrans{X}{\Transsys}{i}$
where $i$ is the order of $X$. Note that we use lower-case variables on the PHFL side as these are all lambda variables. 
We refrain from adding an explicit translation for variables and simply use the lower-/upper-case correspondence.

\subsection{The Translation}

We now adapt the translation given in \cite{Otto99} for polynomial-time queries, and extended in \cite{DBLP:journals/corr/abs-2209-10311} to
$k$-fold exponential-time queries, to the exponential-space hierarchy.
The centrepiece is a  syntactical translation $\SYNTRANS$ from $\hopfp{k}$ with $k \geq 1$ into tail-recursive $\phfl{k+1}{D}$ for suitable $D = 2w + d + 2$, where w.l.o.g.\ $w \geq d$.
It is designed to satisfy the following invariant:
\begin{equation}\label{eq:invariant}\tag{$\star$}
\text{$\Transsys, \alpha \models \varphi$ iff $(\alpha(X_0),\dotsc,\alpha(X_{w-1}),\statealt_{w},\dotsc,\statealt_{D-d-1},\state_0,\dotsc,\state_{d-1}) \in \sem{\syntrans{\varphi}}{\Transsys}{\eta_\alpha}$ f.a.\ $\statealt_{w},\dotsc,\statealt_{D-d-1}$.}
\end{equation}
The assumption here is that the free first-order variables of $\varphi$ are exactly $X_0,\dotsc,X_{w-1}$.

\paragraph*{First-Order Quantification.}

Before we give the basic parts of the translation, recall that \cite{Otto99} defines a formula $\varphi_<$ which compares two elements
w.r.t.\ the \polymucalc-definable order $<$. We assume that this compares the elements at positions
$D-d-2$ and $D-d-1$, i.e., $\overline{\state} \in \sem{\varphi_<}{\Transsys}{\eta}$ iff
$\overline{\state}[D-d-2] < \overline{\state}[D-d-1]$. Let $\sigma^{<_i}$ be defined via $\sigma^{<_i}(D-d-2) = i$, $\sigma^{<_i}(D-d-1) = i+w$ and $\sigma^{<_i}(j) = j$
otherwise.  Then $\overline{\state} \in \sem{\subst{\sigma_i}{\varphi_<}}{\Transsys}{\eta}$ iff
$\overline{\state}[i] < \overline{\state}[i+w]$.

Define the substitution operator $\sigma^{i \leftarrow j}$ via $\sigma^{i \leftarrow j}(i) = j$ and $\sigma^{i \leftarrow j}(i') = i'$ if $i' \not = i$. 
Now let $\Actions$ be a set of actions, let $D > d \geq 2$ be given and let $i < D - \dimension$. 
Consider the formulas
\begin{align*}
\exists_i\ \varphi & \coloneqq \bigvee_{j = D - \dimension}^{D-1} \subst{\sigma^{i\leftarrow j}}{\big(\mu (X\colon\grtype).\ \varphi \vee \bigvee_{a \in \Actions} \mudiam{a_i}{}X\big)} \hspace*{1cm}
\forall_i\ \varphi & \coloneqq \bigwedge_{j = D - \dimension}^{D-1} \subst{\sigma^{i\leftarrow j}}{\big(\nu (X\colon\grtype).\ \varphi \wedge \bigwedge_{a \in \Actions} \mubox{a_i}{}X\big)}
\end{align*}
for some $\varphi \in \phfl{}{\dimension}$. 
\begin{lemma}{(\cite{Otto99})}
\label{lem:ind-quant}
Let $\Transsys$ be an LTS over $\Actions$, and
let $\eta$ and $\overline{\state}$ be good.
Then $\overline{\state} \in \sem{\exists_i\varphi}{\Transsys}{\eta}$ iff there is  a state $\statealt$ such that $\overline{\state}[\statealt/i] \in \sem{\varphi}{\Transsys}{\eta}$, and $\overline{\state} \in \sem{\forall_i\varphi}{\Transsys}{\eta}$ iff  $\overline{\state}[\statealt/i] \in \sem{\varphi}{\Transsys}{\eta}$
for all $\statealt$.
\end{lemma}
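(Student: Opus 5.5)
We prove the claim for $\exists_i$ first and obtain the claim for $\forall_i$ by duality. By the semantics of $\subst{\sigma}{\cdot}$, $\overline{\state}$ satisfies the $j$-th disjunct iff $\overline{\state}[\overline{\state}[j]/i] \in \sem{\psi}{\Transsys}{\eta}$, where $\psi := \mu (X\colon\grtype).\ \varphi \vee \bigvee_{a} \mudiam{a_i}{}X$. Since $\overline{\state}$ is good, $\overline{\state}[j] = \state_{j-(D-d)}$ for $D-d \le j < D$. So $\overline{\state} \in \sem{\exists_i\varphi}{\Transsys}{\eta}$ iff for some $l<d$ the tuple $\overline{\state}[\state_l/i]$ lies in the least fixpoint $\sem{\psi}{\Transsys}{\eta}$.

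The key step is a reachability characterisation of this fixpoint. For an arbitrary tuple $\overline{u}$, we show
\[
\overline{u} \in \sem{\psi}{\Transsys}{\eta} \iff \text{there is } \statealt \text{ reachable from } \overline{u}[i] \text{ with } \overline{u}[\statealt/i] \in \sem{\varphi}{\Transsys}{\eta}.
\]
Here $\mudiam{a_i}{}$ is read as $\mudiam{a}{i}$, i.e.\ it changes only component $i$ along an $a$-edge. We need two facts about $\varphi$. First, $X$ does not occur in $\varphi$ by the variable convention, so $\varphi$'s denotation is fixed. Second, all other components of $\overline{u}$ are left unchanged.

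For "$\Leftarrow$" we induct on the length of the path from $\overline{u}[i]$ to $\statealt$, using fixpoint unfolding. For "$\Rightarrow$" we check that the right-hand side defines a prefixpoint of the body: it contains $\sem{\varphi}{\Transsys}{\eta}$ and is closed under $a$-predecessors in component $i$. It therefore contains the least fixpoint. Equivalently, we can argue via Kleene approximants on the finite lattice.

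Combining the two steps, $\overline{\state} \in \sem{\exists_i\varphi}{\Transsys}{\eta}$ iff some $\statealt$ reachable from some $\state_l$ satisfies $\overline{\state}[\statealt/i] \in \sem{\varphi}{\Transsys}{\eta}$. The standing assumption is that every state of $\Transsys$ is reachable from one of the $\state_l$, so this ranges over all states $\statealt$. This gives the first claim.

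For $\forall_i$ we note that $\forall_i\varphi \equiv \neg\exists_i\neg\varphi$. This holds because negation commutes with $\subst{\sigma}{\cdot}$, turns $\bigvee$ into $\bigwedge$ and $\mudiam{}{}$ into $\mubox{}{}$, and turns $\mu X.\ \chi$ into $\nu X.\ \neg\chi[\neg X/X]$. The first claim applied to $\neg\varphi$ then yields the universal statement. Alternatively, one repeats the argument with greatest fixpoints and the dual characterisation "for all reachable $\statealt$".

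The only delicate point is the role of goodness. It is used only to identify the components $D-d,\dots,D-1$ with the input states, and the index $i< D-d$ ensures those components are not overwritten. No other property of $\eta$ is needed beyond fixing the denotation of $\varphi$. I expect this bookkeeping, rather than any fixpoint reasoning, to be the main care point.
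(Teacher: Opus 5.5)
Your proof is correct. The paper gives no argument of its own and simply cites Otto, and your route is exactly the standard argument behind that citation: you read off each disjunct via $\sigma^{i\leftarrow j}$ and the goodness of $\overline{\state}$, characterise the least fixpoint as reachability in component $i$ (unfolding for one direction, a prefixpoint argument for the other), use that every state is reachable from some $\state_l$, and get $\forall_i$ by duality; you are also right that the goodness of $\eta$ plays no role.
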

Hence, $\exists_i\varphi$ mimics existential first-order quantification, and the tuple
obtained by quantification is also good. Moreover, $\exists_i\varphi$ and $\forall_i\varphi$ are tail-recursive and
closed w.r.t.\ fixpoint variables if $\varphi$ is so.

\paragraph*{The Basic Translation.}

We now give the basic parts of $\SYNTRANS$ which are directly borrowed from \cite{Otto99}:
\begin{align*}
\syntrans{p(X_i)} &\coloneqq p_i &
\syntrans{\varphi_1 \vee \varphi_2} &\coloneqq \syntrans{\varphi_1} \vee \syntrans{\varphi_2} \\
\syntrans{a(X_i,X_j)} &\coloneqq \mudiam{a}{i}{\varphi^{i,j}_\sim} &
\syntrans{\varphi_1 \wedge \varphi_2} &\coloneqq \syntrans{\varphi_1} \wedge \syntrans{\varphi_2} \\
\syntrans{\neg \varphi} &\coloneqq \neg \syntrans{\varphi} &
\syntrans{\exists (X_i\colon\indtype).\ \varphi} &\coloneqq \exists_i.\ \syntrans{\varphi} \\
\syntrans{X(X_{i_0},\dotsc,X_{i_{w-1}})} &\coloneqq \subst{\sigma}x \text { if } (X\colon\tau'_{w,1}) &
\syntrans{\forall (X_i\colon\indtype).\ \varphi} &\coloneqq \forall_i.\ \syntrans{\varphi}
\end{align*}
$\varphi^{i,j}_\sim$ is the formula from Ex.~\ref{ex:bisim}, resp.\ \cite{Otto99} that holds iff the states in positions
$i,j$ are bisimilar, and $\sigma$ is defined via $\sigma(j) = i_j$ for $0 \leq j \leq w$ and
$\sigma(k) = k$ for $k \geq w$. The substitions $\sigma_{i,j}$, resp.\ $\sigma$ serve to swap the elements $i$ and $j$ to positions
$\dimension-1$ and $\dimension$, resp.\ to reorder the elements according to the variable order used on the left
side of the translation.

\begin{lemma}
\label{lem:lower-basic}
Assume that $\syntrans{\varphi}, \syntrans{\varphi_1}$ and $\syntrans{\varphi_2}$ are tail-recursive, have no free fixpoint variables,
 and satisfy the invariant \eqref{eq:invariant}.
Then the translations of $p(X_i),  a(X_i, X_j),  \varphi_1 \vee \varphi_2, \varphi_1 \wedge \varphi_2, \neg \varphi$
as well as $\exists (X_i\colon\indtype).\ \varphi, \forall (X_i\colon\indtype).\ \varphi$ and $X(X_{i_0},\dotsc,X_{i_{w-1}})$
for $X\colon\tau'_{w,1}$
are tail recursive, have no free fixpoint variables, and satisfy the invariant \eqref{eq:invariant}, too.
\end{lemma}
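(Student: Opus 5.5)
The plan is a case-by-case structural argument. Each of the listed translation clauses is checked for three properties: tail recursion (via the rules of Fig.~\ref{fig:phfl-tail}), absence of free fixpoint variables, and the invariant \eqref{eq:invariant}. Throughout, the assignment $\alpha$ and the induced environment $\eta_\alpha$ are fixed, and $\eta_\alpha$ is assumed good in the sense of Def.~\ref{def:good}. The tuples considered are of the form $\overline{\state} = (\alpha(X_0),\dotsc,\alpha(X_{w-1}),\statealt_w,\dotsc,\statealt_{D-d-1},\state_0,\dotsc,\state_{d-1})$, so they are good as well.

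\textbf{Syntactic conditions.} The atomic cases are direct. $p_i$ is derivable as $\emptyset \tail p_i$. For $\subst{\sigma}{x}$ we use $\emptyset\tail x$ together with the $\sigma$-rule. For $\mudiam{a}{i}\varphi^{i,j}_\sim$, the formula $\varphi_\sim$ is closed: its only fixpoint variable $X$ is bound by the outer $\nu$. Its subformulas $\mubox{a}{0}\mudiam{a}{1}X$ do contain $X$ under a box, however, so strictly speaking it is not tail recursive. I would handle this by noting that $\varphi_\sim$ is a closed $\phfl{0}{}$ formula, which can be treated as an atomic closed building block, or rewritten into an equivalent tail-recursive form. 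The cleanest route is to observe that in the deduction $\emptyset \tail \cdot$ we only need the subformula to be closed, and then adjust the rules for closed order-$0$ parts if necessary. I expect this to be the main technical wrinkle.

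For $\vee$, $\wedge$ and $\neg$, every component is closed by hypothesis, so the rules with empty $\fvarsbis$ apply. For the quantifiers, $\exists_i$ introduces a $\mu X$ whose body $\syntrans{\varphi}\vee\bigvee_a\mudiam{a_i}{}X$ is derivable with $\{X\}$, and binding $X$ yields $\emptyset$. For $\forall_i$, the $\nu$ body $\syntrans{\varphi}\wedge\bigwedge_a\mubox{a_i}{}X$ has $X$ under a box, which clashes with the box rule. I would either rewrite it as $\neg\exists_i\neg\syntrans{\varphi}$, which is tail recursive by the $\exists$ case and the negation rule, and prove equivalence by duality, or appeal to the remark after Lemma~\ref{lem:ind-quant}. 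Rewriting is the safer choice.

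\textbf{Invariant.} The invariant is verified case by case.
\begin{itemize}
\item For $p(X_i)$, the tuple lies in $\sem{p_i}{}{}$ iff $p\in\Label(\alpha(X_i))$.
\item For $a(X_i,X_j)$, the $a$-successor $t$ of $\alpha(X_i)$ is placed at position $i$, and $\varphi^{i,j}_\sim$ demands $t\sim\alpha(X_j)$. Since the semantics of $a(X,Y)$ asks for equality, I will argue that because the query is bisimulation-invariant, we may work over the bisimulation quotient; alternatively, equality is replaced by bisimilarity throughout as in \cite{Otto99}, so that the case reduces to Otto's argument.
\item The Boolean cases are immediate. Negation is the one place where the clause ``f.a.\ $\statealt_w,\dotsc$'' matters: we need that $\sem{\syntrans{\varphi}}{}{}$ does not depend on the padding components. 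This follows inductively, since the free index positions of translated formulas are only $0,\dotsc,w-1$ and the last $d$, so the invariant holds for some choice of padding iff it holds for all choices.
\item The quantifier cases follow from Lemma~\ref{lem:ind-quant}, using goodness of $\eta_\alpha$ and of the tuple, and the fact that every state is reachable from some $\state_j$.
\item For $X(X_{i_0},\dotsc)$ with $X\colon\tau'_{w,1}$, $\eta_\alpha(x)$ is the good set $\alpha(X)\times\States^{D-w-d}\times\{\state_0\}\times\dotsb$, and $\subst{\sigma}$ permutes the first $w$ positions to $(\alpha(X_{i_0}),\dotsc)$. Membership then holds iff $(\alpha(X_{i_0}),\dotsc)\in\alpha(X)$, independently of the padding.
\end{itemize}
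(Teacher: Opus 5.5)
Your case split and your appeal to Otto for correctness match the paper's proof, which only cites \cite{Otto99} and says tail recursion holds ``by verification''. You rightly notice that this verification actually fails for two building blocks as literally defined. Your repair for $\forall_i$ is sound: $\neg\exists_i\neg\syntrans{\varphi}$ equals $\forall_i\syntrans{\varphi}$ by the usual dualities. Moreover, $\emptyset\tail\neg\exists_i\neg\syntrans{\varphi}$ is derivable, because the fixpoint variable of $\exists_i$ occurs only under diamonds and disjunctions. Your remark that $\neg$ and $\exists_i$ need independence from the padding positions is also correct; it is automatic if the ``f.a.'' in \eqref{eq:invariant} is read as scoping over the whole biconditional.

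The genuine gap is the case $a(X_i,X_j)$, which you flag but leave open. None of your three options proves the lemma as stated.
\begin{itemize}
\item The rules of Fig.~\ref{fig:phfl-tail} require $\emptyset\tail\psi$ to be \emph{derivable}, not merely that $\psi$ be fixpoint-closed (see the negation, box and application rules). The subformula $\mubox{a}{0}\mudiam{a}{1}X$ of $\varphi_\sim$ is not derivable.
\item Declaring closed order-$0$ blocks atomic, or changing the rules, changes the fragment whose upper bound Thm.~\ref{thm:upper} obtains from Prop.~\ref{prop:hflmodelcheck}. You would then be proving a different theorem.
\item ``Rewrite into a tail-recursive form'' is exactly the missing step. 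It cannot be expected at order $0$: bisimilarity is \PTime-hard, while tail-recursive order-$0$ formulas stay within \NLogSpace (cf.\ Fig.~\ref{fig:bisimcomplexity}).
\end{itemize}

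So you must use the order-$1$ types available in $\phfl{k+1}{}$, following the pattern of the paper's $\mathit{partial}$ construction. Replace $X$ in the body of $\varphi^{i,j}_\sim$ by a ground-type lambda variable $x$, that is, $\psi(x) := (\bigwedge_{p\in\Prop}p_i\leftrightarrow p_j)\wedge\bigwedge_{a\in\Actions}\mubox{a}{i}\mudiam{a}{j}x\wedge\subst{\pi}{x}$, where $\pi$ swaps $i$ and $j$. Now $\emptyset\tail\psi(x)$ holds, since the boxes only guard fixpoint-closed formulas. Then use
\[
\big(\mu(F\colon\grtype\to\grtype).\,\lambda(x\colon\grtype).\,(x\wedge\forall_i\forall_j.\,(x\leftrightarrow\psi(x)))\vee F(\psi(x))\big)\ \mytrue ,
\]
with $\forall$ in your $\neg\exists\neg$ form. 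This formula is tail recursive, because the operand $\psi(x)$ is fixpoint-closed, and it has order $1$. On a finite LTS it denotes the union of those approximants $\psi^n(\States^D)$ that pass the stability test. These approximants decrease and depend only on positions $i,j$, and the test is uniform on good tuples. Hence the union is exactly $\sem{\varphi^{i,j}_\sim}{\Transsys}{}$ on good tuples, which is all that $\mudiam{a}{i}$ needs.

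The correctness of this case also needs more than you state. On $\Transsys$ itself, $\mudiam{a}{i}\varphi^{i,j}_\sim$ only says that some $a$-successor of $\alpha(X_i)$ is bisimilar to $\alpha(X_j)$. So \eqref{eq:invariant} fails on structures that are not bisimulation quotients. Working over the quotient, as Otto does and as the paper implicitly does, is the right remedy. It means restating \eqref{eq:invariant} for $\Transsys/_{\sim}$, with bisimulation-closed values for the relation variables.
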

See \cite{Otto99} for the correctness part. The claim on tail-recursiveness is by verification. Note that, without asking for negation normal form, it is crucial that all 
formulas be fixpoint closed, for otherwise occurrences of negation could break tail-recursiveness.

\paragraph*{Generalised Quantification.} We now sketch how higher-order quantification is handled in $\SYNTRANS$. This is a condensed
version of the presentation in \cite{DBLP:journals/corr/abs-2209-10311}.

\begin{restatable}{lemma}{quantifiersets}
\label{lem:quantifier-sets}
Let $\varphi$ be tail recursive and fixpoint closed, let $\Transsys$ be an LTS and let $\eta$ be good. Then there are functions
$\exists^{(w)} x.\ \varphi$ and $\forall^{(w)} x.\ \varphi$ such that
$\overline{\state} \in \sem{\exists^{(w)} x.\ \varphi}{\Transsys}{\eta}$ iff  there is a good $M \subseteq \States^D$ such that $\overline{\state} \in \sem{\varphi}{\Transsys}{\eta[x \mapsto M]}$, and
$\overline{\state} \in \sem{\forall^{(w)} x.\ \varphi}{\Transsys}{\eta}$ iff $\overline{\state} \in \sem{\varphi}{\Transsys}{\eta[x \mapsto M]}$ for all good $M \subseteq \States^D$.
 \end{restatable}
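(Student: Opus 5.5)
The plan is to implement quantification over good sets by a tail-recursive least fixpoint of first-order type that enumerates all good sets one by one, following the idea of \cite{DBLP:journals/corr/abs-2209-10311} but arranged so that the recursive call is the last operation.

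Good sets are determined by their first-$w$ projection $M \subseteq \States^w$. Using the \polymucalc-definable order $<$ of \cite{Otto99} (via $\subst{\sigma^{<_i}}{\varphi_<}$ and $\varphi_\sim^{i,i+w}$, both fixpoint closed), the lexicographic order on $w$-tuples is first-order definable on positions $0,\dots,w-1$ versus $w,\dots,2w-1$. The two scratch positions $D-d-2,D-d-1$ are reserved for $\varphi_<$, so $D=2w+d+2$ suffices. This gives a total order on $\States^w$ modulo bisimilarity, and hence an order on good sets viewed as binary counters: the characteristic vector of $M$ is read with the $<$-largest tuple as the most significant bit.

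From these ingredients I would write a fixpoint closed formula $\mathit{succ}(x)$ of ground type. Using the quantifiers $\exists_i,\forall_i$ of Lemma~\ref{lem:ind-quant} over positions $0,\dots,2w-1$, it defines the good set $M+1$ from $M=x$: a tuple $\bar t$ lies in $M+1$ iff ($\bar t\in M$ and some $\bar u<\bar t$ is not in $M$) or ($\bar t\notin M$ and all $\bar u<\bar t$ are in $M$). Intersecting with $\mathsf e$ keeps the result good. Similarly I would define fixpoint-closed formulas $\mathit{zero}$ (which is $\myfalse\wedge\mathsf e$, i.e.\ just $\myfalse$ restricted) and $\mathit{full}(x)$, which tests whether $x$ is the maximal set. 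The test in $\mathit{full}$ is global, so its result is either all of $\eta(\mathsf e)$ or empty, as required on good tuples.

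Then I would define
\[
\exists^{(w)}x.\,\varphi \;:=\; \big(\mu(F\colon\grtype\to\grtype).\,\lambda x.\ \varphi \vee (\neg\mathit{full}(x)\wedge F(\mathit{succ}(x)))\big)\,\mathit{zero},
\]
and dually $\forall^{(w)}x.\,\varphi$ with $\nu$, $\wedge$, and $\mathit{full}(x)\vee F(\mathit{succ}(x))$. Tail-recursiveness is checked against Fig.~\ref{fig:phfl-tail}. The operand $\mathit{succ}(x)$ and the argument $\mathit{zero}$ are closed w.r.t.\ fixpoint variables. In the conjunction, the left conjunct $\neg\mathit{full}(x)$ is closed, and the negation is applied only to fixpoint-closed formulas. $\varphi$ is fixpoint closed by assumption. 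The order is $1$, which is fine since this lemma concerns ground-type variables; the higher types are handled by lifting the same scheme.

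For correctness, I would unfold the fixpoint and show by induction on the number $m$ of remaining counter values that $F(\mathrm{tptr}(M_j))$, with $M_j$ the $j$-th good set, equals $\bigcup_{j'\ge j}\sem{\varphi}{}{\eta[x\mapsto M_{j'}]}$ on good tuples. Since the chain of calls is finite and terminates at $\mathit{full}$, the least and greatest fixpoints agree along it. The variance annotations need care: $F$ is used monotonically, while $x$ occurs both positively and negatively, so it must get variance $\varboth$ or be typed without variance constraints.

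The main obstacle is the successor construction. One has to verify that it respects goodness and that bisimilar tuples are handled consistently, since $<$ is only a total order on bisimulation classes. I would resolve this by noting that, by Prop.~\ref{prop:bisiminv}, only bisimulation-closed good sets are relevant, and by restricting the enumeration to such sets. This is harmless because $\varphi$ cannot distinguish a set from its bisimulation closure.
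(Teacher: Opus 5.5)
Your construction is essentially the paper's. Your $\mathit{succ}$ is exactly $\nxt^{(w)}$, including the intersection with $\mathsf{e}$. Your quantifiers are the paper's $\big(\mu F.\,\lambda x.\,\varphi\vee F(\nxt^{(w)}(x))\big)\,\myfalse$ and its $\nu$/$\wedge$ dual, with the same tail-recursion check. There are two minor differences:
\begin{itemize}
\item Your guard $\neg\mathit{full}(x)$ is unnecessary. The map $\nxt^{(w)}$ sends the maximal set back to $\emptyset$, and the least (greatest) fixpoint along that cycle is already the union (intersection) of $\sem{\varphi}{\Transsys}{\eta[x\mapsto M_j]}$ over the whole orbit. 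The guard is harmless, though, and it simplifies your induction.
\item Your use of $\varphi_\sim^{i,i+w}$ on the agreeing prefix of the lexicographic order is actually needed. The paper's $\varphi_<^w$ uses $\subst{\sigma_j}{\neg\varphi_<}$ there, which only expresses $\geq$ and does not give a strict total order on $w$-tuples.
\end{itemize}

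The step that does not hold is your final paragraph. Since $<$ only orders bisimulation classes, the counter started at $\myfalse$ visits only bisimulation-closed good sets; this is true of yours and the paper's alike. So what you obtain is quantification over those sets. You justify this as harmless because $\varphi$ cannot distinguish a set from its bisimulation closure, but that is false. Prop.~\ref{prop:bisiminv} is about closed formulas, and with $x$ free it fails.

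For $w\geq 2$, take the tail-recursive, fixpoint-closed $\varphi = x\wedge\neg\subst{\sigma^{0\leftarrow 1}}{x}$. Take a good $\overline{\state}$ with $\overline{\state}[0]\sim\overline{\state}[1]$ but $\overline{\state}[0]\neq\overline{\state}[1]$. The good set with first-$w$ projection $\{(\overline{\state}[0],\dotsc,\overline{\state}[w-1])\}$ makes $\varphi$ true at $\overline{\state}$. No bisimulation-closed good set does, since such a set contains $\overline{\state}$ iff it contains $\overline{\state}[\overline{\state}[1]/0]$.

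So the construction does not deliver the lemma as literally stated for all good $M$; the paper's sketch makes the same restriction silently. The repair is to state the lemma for bisimulation-closed good sets and to justify this at the level of the translation. The query is bisimulation-invariant, so the \HOPFP formula may be evaluated on the bisimulation quotient of $\Transsys$. There every relation corresponds to a bisimulation-closed set, so the environments $\eta_\alpha$ in invariant~\eqref{eq:invariant} only ever need such sets. No invariance of $\varphi$ in $x$ is required.
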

\begin{proof}
We sketch the construction.
Let $\sigma_{\rightarrow w}$ be defined by 
$\sigma_{\rightarrow w}(i) = w+i$ for $i < w$ and $\sigma_{\rightarrow w}(j) = j$ for $j \geq w$. 
Then $\overline{\state} \in \sem{\subst{\sigma_{\rightarrow w}}{\varphi}}{\Transsys}{\eta}$ iff
$\overline{\state}[\overline{\state}[0]/w,\dotsc,\overline{\state}[w-1]/2w-1] \in \sem{\varphi}{\Transsys}{\eta}$. 
The intended use here is to shift the first $w$ elements of a formula to the right to make room for another tuple at the first $w$
positions such that the two tuples can be compared lexicographically.

  Now consider the following formulas.
\begin{align*}
\varphi_<^w          &\coloneqq \bigvee_{i=0}^{w-1} \subst{\sigma_i}{\varphi_<} \wedge \bigwedge_{j=0}^{i-1} \subst{\sigma_j}{\neg \varphi_<} \\
\varphi_<^{(w)}(x,y) &\coloneqq \exists_{0}\dotsc\exists_{w-1}.\ y \wedge \neg x
                        \wedge \subst{\sigma_{\rightarrow w}}{\big(\forall_{0}\dotsc\forall_{w-1}. \varphi_<^w \rightarrow x \rightarrow y\big)}. \\
\nxt^{(w)}(x)        &\coloneqq \lambda(x \colon \grtype).\ \mathsf{e} \wedge \neg x \wedge \subst{\sigma_{\rightarrow w}}{\big(\forall_{0}\dotsc\forall_{w-1}.\ \varphi_<^{w} \rightarrow x\big)} \\
                     & \qquad \qquad \;\;\;\; \vee \mathsf{e} \wedge x \wedge \subst{\sigma_{\rightarrow w}}{\big(\exists_0\dotsc\exists_{w-1}.\ \varphi_<^{w} \wedge \neg x\big)} \\
\exists^{(w)} x.\varphi &\coloneqq \big(\mu (F\colon \grtype \to \grtype).\, \lambda (x\colon\grtype).\, \varphi \vee F(\nxt^{(w)}(x))\big)\ \myfalse \\
\forall^{(w)} x.\varphi &\coloneqq \big(\nu (F\colon \grtype \to \grtype).\, \lambda (x\colon\grtype).\, \varphi \wedge F(\nxt^{(w)}(x))\big)\ \myfalse
\end{align*} 
The functioning of these formulas is explained in \cite{DBLP:journals/corr/abs-2209-10311}; the idea
is that they lift the total order $<$ to tuples and sets, and then use iteration through all good sets to model
existential, resp.\ universal quantification. The restriction to good sets stems from the
assumption on $\mathsf{e}$ and
the fact that the empty set (defined by $\myfalse$) is good, and that the incrementation pattern
generates good sets from good sets.
\end{proof}

\begin{restatable}{lemma}{quantifiergeneral}
\label{lem:quantifier-general}
Let $\varphi$ be tail recursive and fixpoint closed, let $\Transsys$ be an LTS and let $\eta$ be good. Let $k > 2$. Then there are functions $\exists^{w,k} x.\, \varphi$ and $\forall^{w,k} x.\, \varphi$ such that
$\overline{\state} \in \sem{\exists^{w,k} x.\, \varphi}{\Transsys}{\eta}$ iff there is $f \in \semm{\tau^\circ_{w,k}}{\Transsys}$ such that $\overline{\state} \in \sem{\varphi}{\Transsys}{\eta[x\mapsto f]}$, and $\overline{\state} \in \sem{\forall^{w,k} x.\, \varphi}{\Transsys}{\eta}$ iff $\overline{\state} \in \sem{\varphi}{\Transsys}{\eta[x\mapsto f]}$ for all $f \in \semm{\tau^\circ_{w,k}}{\Transsys}$.
 \end{restatable}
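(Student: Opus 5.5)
The plan is an induction on $k$ that mirrors Lemma~\ref{lem:quantifier-sets}. That lemma supplies an order, an increment operator and an iteration that starts from the bottom element, and we lift all three one level up the type hierarchy. For the base level we use the good sets and the lexicographic order $\varphi^{(w)}_<$ on them. For the induction we assume that at level $k-1$ we already have three closed, tail-recursive formulas. The first is $\mathit{bot}_{k-1}$, denoting the least element of $\semm{\tau^\circ_{w,k-1}}{\Transsys}$, namely the constant-$\emptyset$ good function. The second is a comparison $\varphi^{(w,k-1)}_<(x,y)$ that defines a strict total order on the good objects of that type. The third is a successor $\nxt^{(w,k-1)}$ of type $\tau_{w,k-1}\to\tau_{w,k-1}$ that enumerates all good objects in this order. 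The quantifiers are then defined, exactly as before, by
\[
\exists^{w,k}x.\,\varphi \coloneqq \big(\mu (F\colon \tau_{w,k}\to\grtype).\,\lambda (x\colon\tau_{w,k}).\,\varphi \vee F(\nxt^{(w,k)}(x))\big)\ \mathit{bot}_k,
\]
with $\nu$ and $\wedge$ used dually for $\forall^{w,k}$. This formula has order $k+1$ and is tail recursive. The only occurrence of $F$ is in head position; the operand $\nxt^{(w,k)}(x)$ is fixpoint closed; and $\varphi$ is fixpoint closed by assumption.

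To construct the order and successor at level $k$, we regard a good $f\in\semm{\tau^\circ_{w,k}}{\Transsys}$ as the characteristic function of a set of $w$-tuples of level-$(k-1)$ objects. The steps are as follows.
\begin{itemize}
\item Lexicographic order on tuples: we lift $\varphi^{(w,k-1)}_<$ to a lexicographic order on $w$-tuples, by analogy with $\varphi^w_<$. This uses the level-$(k-1)$ quantifiers, available by induction, to express ``all earlier components are equal''. Equality is defined as the absence of $<$ in both directions.
\item Order on sets of tuples: $f<g$ holds iff there is a tuple $\bar h$ with $g(\bar h)\wedge\neg f(\bar h)$ such that $f$ and $g$ agree on every lexicographically larger tuple. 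This is the same binary-counter pattern as in $\varphi^{(w)}_<$.
\item Successor: $\nxt^{(w,k)}(f)$ is written as $\lambda(h_1,\dots,h_w).\,\mathsf e\wedge\psi$, where $\psi$ is the binary increment. The output is true on $\bar h$ iff either $f(\bar h)$ is false and $f$ is true on all smaller tuples, or $f(\bar h)$ is true and some smaller tuple is false in $f$. The inner quantifications range over level $k-1$ and are therefore provided by the induction hypothesis. The base of the induction is Lemma~\ref{lem:quantifier-sets} itself.
\end{itemize}

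Correctness then follows by the same fixpoint-unfolding argument as in \cite{DBLP:journals/corr/abs-2209-10311}. On a finite LTS the good objects of type $\tau_{w,k}$ are finitely many and $\nxt^{(w,k)}$ enumerates them all. The $n$-th approximant of the least fixpoint, applied to $\mathit{bot}_k$, is therefore the disjunction of $\varphi$ over the first $n$ objects. Once the counter overflows, $\nxt$ wraps back to $\mathit{bot}_k$, so the disjunction already covers every good object. Since $\varphi$ does not depend on $F$, the iteration stabilises after finitely many steps and the limit is $\bigvee_f \varphi[x\mapsto f]$. The $\nu$ case is dual.

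The main obstacle is to check that everything stays good, which is exactly what the semantic side of the lemma needs. There are three things to verify. First, the successor must output a function that returns only $\States$ or $\emptyset$ on good inputs. We get this by guarding with $\mathsf e$ and by making the body depend only on the first $w$ positions. Second, the inner quantifiers must range only over good arguments, and the induction hypothesis guarantees this. Third, the behaviour of $f$ on non-good arguments is irrelevant. This is the step where I expect most of the work: I would restrict all comparisons to good arguments, and then show that two functions that agree on good inputs cannot be distinguished by any translated formula.
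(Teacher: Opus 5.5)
Your proposal is essentially the paper's own construction. It uses induction on the level, a lexicographic lift of the lower-level order to $w$-tuples, the binary-counter order and increment $\nxt^{w,k}$ on characteristic functions, and $\exists^{w,k}x.\,\varphi$ as $\big(\mu F.\,\lambda x.\,\varphi\vee F(\nxt^{w,k}\,x)\big)\,\myfalse_{(w),k}$, dually with $\nu$ and $\wedge$. Your typing $F\colon\tau_{w,k}\to\grtype$, your inner quantifiers at level $k-1$, and your use of equality (neither $<$ nor $>$) on earlier components are the correct readings of the displayed formulas, which write $F\colon\tau_{w,k}\to\tau_{w,k}$, $\forall^{w,k}$ inside $\nxt^{w,k}$, and only $\neg\varphi^{(w),k-1}_<(x_j,y_j)$.

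The worry in your last paragraph is genuine, and the paper's sketch passes over it: the iteration enumerates good functions only up to their behaviour on good arguments. Your plan of showing that translated formulas cannot distinguish functions that agree on good arguments is the right repair. Some such hypothesis on $\varphi$ is in fact needed for the statement as written, because a tail-recursive, fixpoint-closed $\varphi$ that feeds $x$ a non-good argument can separate the enumerated representatives from other good functions.
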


\begin{proof}
Again, we sketch the construction. Details can be found in \cite{DBLP:journals/corr/abs-2209-10311}. 
Consider the following formulas for $k \geq 1$, where $\overline{x} = x_1,\dotsc,x_w$ and 
$\overline{y} = y_1,\dotsc,y_w$:
\begin{align*}
\varphi_<^{w,k-1}(x_0,\dotsc,x_{w-1},y_0,\dotsc,y_{w-1}) &\coloneqq \bigvee_{i = 0}^{w-1} \varphi_<^{(w),k-1}(x_i, y_i)
                                                           \wedge \bigwedge_{j = 0}^{i-1} \neg \varphi_<^{(w),k-1}(x_j,y_j) \\
\varphi_<^{(w),k}(x,y)  &\coloneqq \exists^{w,k-1} \overline{x}.\ y(\overline{x}) \wedge \neg x(\overline{x}) \\
                                    &\; \wedge \forall^{w,k-1} \overline{y}.\ \varphi_<^{w,l}(\overline{y},\overline{x}) \rightarrow
																						    x(\overline{y}) \rightarrow y(\overline{y}) \\
\myfalse_{(w),k} & \coloneqq \lambda (\overline{x}\colon\tau_{w,k-1}).\, \myfalse \\
\nxt^{w,k}(x) &\coloneqq \lambda (x\colon\tau_{w,k}).\  \lambda (\overline{x}\colon\tau_{w,k-1}).\  \\
& \big( \neg x(\overline{x}) \wedge \forall^{w,k} \overline{y}.\ 
\varphi_<^{w,k}(\overline{y},\overline{x}) \rightarrow x(\overline{y})\big) \\
& \vee \big(  x(\overline{x}) \wedge \exists^{w,k} \overline{y}.\ 
\varphi_<^{w,k}(\overline{y},\overline{x}) \wedge \neg x(\overline{y})\big) \\
\exists^{w,k}(x).\ \varphi &\coloneqq \big( \mu (F\colon \tau_{w,k} \to \tau_{w,k}).\ \lambda (x\colon\tau_{w,k}).\, \varphi \vee F(\nxt^{w,k} x)\big)\, \myfalse_{(w),k} \\
\forall^{w,k}(x).\ \varphi &\coloneqq \big( \nu (F\colon \tau_{w,k} \to \tau_{w,k}).\ \lambda (x\colon\tau_{w,k}).\, \varphi \wedge F(\nxt^{w,k} x)\big)\, \myfalse_{(w),k} 
\end{align*}
where $\varphi_<^{(w),k-1} =  \varphi_<^{(w)}$ in case $k = 1$ and $\exists^{w,k-1} x. \varphi = \exists^{(w)} x. \varphi$ if $k=1$. 

These formulas follow the same pattern as those for quantification over sets, but instead they quantifiy
over $\semm{\tau^\circ_{w,k}}{\Transsys}$. The construction pattern is the same: lift $<$ to tuples and then
sets, and then use iteration instead of quantification. Goodness of the functions quantified over is 
due to the fact that $\nxt^{w,k}(x)$ is agnostic to individual states.
\end{proof} 

Hence, we extend $\SYNTRANS$ as follows:
\begin{align*}
\syntrans{\exists(X\colon \tau'_{w,k}).\ \varphi} &\coloneqq \exists^{w,k}(x).\ \syntrans{\varphi} \text{ if } k \geq 2 \\
\syntrans{\forall(X\colon \tau'_{w,k}).\ \varphi} &\coloneqq \forall^{w,k}(x).\ \syntrans{\varphi} \text{ if } k \geq 2 \\
\syntrans{X(X_0,\dotsc,X_{w-1})} &\coloneqq x(x_0,\dotsc,x_{w-1}) \text{ if } (X \colon \tau'_{w,k}) \text{ and } k \geq 2 %\\
\end{align*}

\begin{lemma}
\label{lem:lower-higher}
Assume that $\syntrans{\varphi}$ is tail-recursive, has no free fixpoint variables,
 and satisfies the invariant \eqref{eq:invariant}.
Then the translations of $\exists(X\colon \tau'_{w,k}).\ \varphi, \forall(X\colon \tau'_{w,k}).\ \varphi$ and $X(X_0,\dotsc,X_{w-1})$ %, \neg X(X_0,\dotsc,X_{w-1})$
for $(X \colon \tau'_{w,k}) \text{ with } k \geq 2$
are tail-recursive, have no free fixpoint variables, and satisfy the invariant \eqref{eq:invariant}, too.
\end{lemma}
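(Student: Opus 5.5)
The proof splits into three parts: tail recursion, fixpoint closedness, and the invariant \eqref{eq:invariant}, handled by structural verification on the three new clauses of $\SYNTRANS$.

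\textbf{Tail recursion and closedness.} For $\syntrans{X(X_0,\dotsc,X_{w-1})} = x(x_0,\dotsc,x_{w-1})$ this is immediate. The head $x$ is a lambda variable, so $\emptyset \tail x$. Each argument is a lambda variable, hence fixpoint closed. Repeated use of the application rule then gives $\emptyset \tail x(x_0,\dotsc,x_{w-1})$.

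For the quantifier clauses I inspect the body of the fixpoint in $\exists^{w,k}(x).\,\syntrans{\varphi}$, namely $\lambda x.\,\syntrans{\varphi} \vee F(\nxt^{w,k}\,x)$.
\begin{itemize}
\item The left disjunct has $\emptyset \tail \syntrans{\varphi}$ by assumption.
\item In the right disjunct $F$ occurs only in head position, so $\{F\} \tail F$.
\item The operand $\nxt^{w,k}\,x$ must be fixpoint closed. This holds because $\nxt^{w,k}$ is built only from lambda variables, modal operators and the lower-level quantifiers $\exists^{w,k-1}$, $\forall^{w,k-1}$. These are closed by an inner induction on $k$, with base case Lemma~\ref{lem:quantifier-sets}. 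The negations and implications inside $\nxt^{w,k}$ and $\varphi_<^{(w),k}$ are therefore applied to closed formulas, as the rules require.
\end{itemize}
The disjunction rule then yields $\{F\} \tail \ldots$. The $\mu F$ rule removes $F$. The final application to the closed operand $\myfalse_{(w),k}$ is allowed. The universal case is dual, using the conjunction rule with closed left conjunct $\syntrans{\varphi}$. Hence every new formula has empty free fixpoint variable set.

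\textbf{Invariant for application.} Unfold $\eta_\alpha(x) = \semtrans{\alpha(X)}{\Transsys}{k}$. By definition of $\SEMTRANS$ this function returns $\States$ (resp.\ the full good set, after intersecting with $\mathsf{e}$) exactly when its arguments are the translations of some $(M_0,\dotsc,M_{w-1}) \in \alpha(X)$, and $\emptyset$ otherwise. Since the arguments are $\eta_\alpha(x_j) = \semtrans{\alpha(X_j)}{\Transsys}{k-1}$, the value is full iff $(\alpha(X_0),\dotsc,\alpha(X_{w-1})) \in \alpha(X)$. This gives \eqref{eq:invariant} for every good tuple, independently of the filler positions $\statealt_w,\dotsc$. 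One point needs checking: $\SEMTRANS$ is injective, so that the translated arguments identify the $M_j$ uniquely. I expect to verify this by induction on $k$.

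\textbf{Invariant for quantifiers.} Apply Lemma~\ref{lem:quantifier-general} (or Lemma~\ref{lem:quantifier-sets} for $k=2$) with the good environment $\eta_\alpha$. It gives $\overline{\state} \in \sem{\exists^{w,k}(x).\,\syntrans{\varphi}}{\Transsys}{\eta_\alpha}$ iff some good $f$ of the appropriate type satisfies $\overline{\state} \in \sem{\syntrans{\varphi}}{\Transsys}{\eta_\alpha[x\mapsto f]}$.

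The main obstacle is matching this with the \HOPFP side. Existential quantification ranges over all $M \in \semm{\tau'_{w,k}}{\Transsys}$, whereas the lemma ranges over good objects of type $\tau^\circ$. I therefore need the image of $\SEMTRANS$ to be exactly the good objects that the iteration in Lemma~\ref{lem:quantifier-general} enumerates.

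I would first try proving that $\SEMTRANS$ is a bijection between $\semm{\tau'_{w,k}}{\Transsys}$ and $\semm{\tau^\circ_{w,k-2}}{\Transsys}$, with goodness of functions understood as $\{0,1\}$-valued on translated inputs and agnostic elsewhere. This would be an induction using the counting argument that $\nxt$ cycles through all such characteristic functions.

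Once this bijection is available, $\eta_\alpha[x\mapsto \semtrans{M}{\Transsys}{k}] = \eta_{\alpha[X\mapsto M]}$. The inductive hypothesis \eqref{eq:invariant} for $\varphi$ then closes the existential case. The universal case is dual.
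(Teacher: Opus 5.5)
Your route is the paper's: its proof of this lemma is only an appeal to Lemmas~\ref{lem:quantifier-sets} and~\ref{lem:quantifier-general}. Your treatment of the application case is fine, namely unfolding $\SEMTRANS$ plus injectivity, which does follow by an easy induction since $\States \neq \emptyset$. The gap is the step you flag yourself, and the repair you sketch would not close it. Since $\semtrans{M}{\Transsys}{k+1}$ tests its arguments for \emph{equality} with translated objects, a good function $f$ outside the image of $\SEMTRANS_k$ makes every translated relation return $\emptyset$. Suppose the quantifier may bind $x$ to such an $f$, as reading goodness as ``agnostic elsewhere'' allows. Take $\exists X.\,\neg Y(X,\dotsc,X)$ with $\alpha(Y)$ the full relation. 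The body $\neg y(x,\dotsc,x)$ then holds under $x \mapsto f$, so the translation is true while the \HOPFP formula is false. With $\SEMTRANS$ as defined, the only workable reading is that $\semm{\tau^\circ_{w,k-2}}{\Transsys}$ is \emph{exactly} the image of $\SEMTRANS_k$. Your injectivity then gives the bijection at once. The real content is that the iteration from $\myfalse_{(w),k}$ along $\nxt^{w,k}$ visits precisely these objects. This is what Lemma~\ref{lem:quantifier-general} must be taken to assert, and it is how the paper's one-line proof uses it. A counting argument cannot supply it. At best it shows that the restrictions of the iterates to translated arguments run through all patterns, not that each iterate is $\emptyset$ on every non-translated argument tuple. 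For the displayed $\nxt^{w,k}$ this needs real care, since $\neg x(\overline{x})$ is all of $\States^D$ at any $\overline{x}$ outside the image.

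A smaller point: your tail-recursion check covers the negation rule of Fig.~\ref{fig:phfl-tail} but not the box rule, which likewise demands a fixpoint-closed argument. The first-order $\forall_i$ used inside $\nxt^{(w)}$ contains $\mubox{a_i}{}X$ with $X$ free. So it is not derivable as displayed, despite the paper's remark after Lemma~\ref{lem:ind-quant}. Use the equivalent $\neg\exists_i\neg\psi$ instead, which is tail recursive because $\psi$ is closed. Nor is the black box $\varphi_<$ tail recursive as it stands. Compare the conjunct $\mubox{a}{0}\mudiam{a}{1}X$ in $\varphi_\sim$; moreover, an order-$0$ tail-recursive formula for $<$ would put bisimilarity in \NLogSpace. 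It has to be replaced by an order-$1$ tail-recursive equivalent. One way is to iterate each fixpoint body on a lambda-bound argument, working from the outside in: $\nu X.\,\psi \rightsquigarrow (\nu F.\,\lambda r.\, r \wedge F(\psi[r/X]))\,\mytrue$, and dually for $\mu$.
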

The proof is by Lemmas~\ref{lem:quantifier-sets} and \ref{lem:quantifier-general}. % and \ref{lem:quant-tr}.

\paragraph*{Partial Fixpoints.}
A partial fixpoint of a function $f$ can be characterised as follows: start with the empty set,
and then apply $f$ a number of times, yielding $f^n(\emptyset)$ for some $n$. The fixpoint has been reached when 
$f^n(\emptyset) = f^{n+1}(\emptyset)$.

To encode this, we define a formula $\mathit{partial}(\varphi, X, Y_{i_0},\dotsc, Y_{i_{w-1}}, Z_{j_{0}},\dotsc,Z_{j_{w-1}})$,
for $X$ of type $\tau'_{w,i+1}$, and consequently $Y_{i_0},\dotsc, Y_{i_{w-1}}, Z_{j_0},\dotsc,Z_{j_{w-1}}$
of type $\tau'_{w,i}$, as follows for $i = 1$:
\begin{multline*}
\mathit{partial}(\varphi, X, Y_{i_0},\dotsc, Y_{i_{w-1}}, Z_{j_{0}},\dotsc,Z_{j_{w-1}}) = \\
\Big(\mu (F\colon \tau_{w,0} \to \tau_{w,0}).\ \lambda (x\colon \tau_{w,0}).\ F(\syntrans{\varphi}) \vee \\
\big(\{\sigma'\} x \wedge \forall_0, \dotsc, \forall_{w-1}.\ \big(x \leftrightarrow \syntrans{\varphi}\big)\big)\Big)\ \myfalse
\end{multline*}
Here, $\sigma'(i') = j'$ if $i'=i_k$ and $j'=j_k$, and otherwise $\sigma'(i') = i'$.
The intuition of the least fixpoint in this formula is that iteration starts with the empty set for $x$, and then 
either $\syntrans{\varphi}$ replaces $x$ in the next iteration (note that $\syntrans{\varphi}$ contains $x$ as
a free variable, so this amounts to unfolding the partial fixpoint once), or the current 
iteration of $x$ is assumed to be stable, in which case $x$ has to contain
$z_{j_0},\dotsc,z_{j_{w-1}}$, and survive a stability test: for \emph{all} elements $(z_{j_0},\dotsc,z_{j_{w-1}})$, 
$x$ contains the tuple $(z_{j_0},\dotsc,z_{j_{w-1}})$ iff $\varphi(x)$ does.\footnote{Due to the higher dimensionality, the tuples in question have arity $D$.} The latter
obviously only works if $x$ is already a fixpoint of $\varphi$. Hence,
a nontrivial least fixpoint in $\mathit{partial}(\varphi, X, Y_{i_0},\dotsc, Y_{i_{w-1}}, Z_{j_{0}},\dotsc,Z_{j_{w-1}})$
is reached iff such a partial fixpoint exists (and then the least fixpoint
agrees with the partial fixpoint modulo $\mathit{tptr}$), and otherwise the least fixpoint
is the empty set.

The above formula is defined as follows for $i > 1$:
\begin{multline*}
\mathit{partial}(\varphi, X, Y_{i_0},\dotsc, Y_{i_{w-1}}, Z_{j_0},\dotsc,Z_{j_{w-1}}) = \\
\mu (F\colon \tau_{w,i-1} \to \tau_{w,i-1}).\ \lambda (x\colon \tau_{w,i-1})\ F(\lambda (y_{i_0},\dotsc,y_{i_{w-1}} \colon \tau_{w,i-2}).\ \syntrans{\varphi}) \vee \\
\Big(x(z_{j_0},\dotsc,z_{j_{w-1}}) \wedge \forall^{(w,k)} z_{j_0}, \dotsc, \forall^{(w,k)} z_{j_{w-1}}.\\  x(z_{j_0},\dotsc,z_{j_{w-1}}) \leftrightarrow \syntrans{\varphi}(z_{j_0},\dotsc,z_{j_{w-1}})\Big)\ \myfalse_{\tau'_{w,i}}.
\end{multline*}
This is the
natural generalisation of the same construction for $i = 1$, using higher-order quantification instead of universal reachability.

We add the final case to $\SYNTRANS$, using both of the subcases above.
\begin{align*}
\syntrans{\big(\pfp(X,Y_{i_0},\dotsc,Y_{i_{w-1}}).\ \varphi\big)(Z_{j_0},\dotsc,Z_{j_{w-1}})} &\coloneqq 
\mathit{partial}(\varphi, X, Y_{i_0},\dotsc, Y_{i_{w-1}}, Z_{j_{0}},\dotsc,Z_{j_{w-1}}) % \\
\end{align*}

\begin{example}
\label{ex:trans-pfp}
Consider the formula $\varphi_{\mathsf{nonuniv}}$ from Ex.~\ref{ex:pfp}. We illustrate how its
first part, i.e.\ the formula $\exists Z.\ (\PFP(X,Y)\  \mathsf{step})(Z)$ is translated. For the sake of simplicity,
we pretend that $w = 3$. This works since no higher-order quantification happens in this formula.

First, the translation of $\mathsf{step}$ is as follows, where we assume that variable
$Y$ is encoded at index $1$, and variable $Y'$ is encoded at index $2$, and $\sigma_i$
moves the variable at index $i$ into the right position to be checked for set membership.
\begin{align*} 
\syntrans{\mathsf{step}(X, Y)} =\ & \quad \big(i_1 \wedge \forall_2 .\ \neg \{\sigma_2\} x\big) \\
                      & \vee \big( \{\sigma_1\} x \wedge \exists_2.\ \{\sigma_2\} x \wedge \forall_2. (\{\sigma_2\} x \rightarrow \neg f_2)\big)  \\
											& \vee \big(\exists_2.\ \mudiam{a_2}{} \varphi_\sim^{1,2} \wedge \exists_2.\ \{\sigma_2\}x \wedge f_2\big). 
\end{align*}
Now assume that the index of $z$ is $0$.
The translation of $\exists Z.\ (\PFP(X,Y)\  \mathsf{step})(Z)$ is then
\[
\exists_0.\ 
\Big(\mu F.\ \lambda x.\ F(\syntrans{\mathsf{step}}) \vee 
\big(\{\sigma_0\} x \wedge \forall_0 \forall_1 \forall_{2}.\  \big(x \leftrightarrow \syntrans{\mathsf{step}}\big)\big)\Big)\ \myfalse
\]
Here, the type of $F$ is $\tau_{w,0} \to \tau_{w,0}$ and that of $x$ is $\tau_{w,0}$.
\end{example}

\begin{lemma}
\label{lem:trans-partial}
Assume that $\syntrans{\varphi}$ is tail-recursive, has no free fixpoint variables,
 and satisfies the invariant \eqref{eq:invariant}.
Then so do the translations of $\pfp(X,Y_{0},\dotsc,Y_{w-1})$ for all types.
\end{lemma}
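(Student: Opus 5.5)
The plan is to verify the three properties separately for the formula $\mathit{partial}(\varphi, X, \overline{Y}, \overline{Z})$: tail-recursiveness, fixpoint-closedness, and the invariant \eqref{eq:invariant}. I treat the case $i=1$ (where $X\colon\tau'_{w,2}$ and $x\colon\tau_{w,0}$) first and then lift the argument to $i>1$. The lifting replaces the reachability quantifiers $\forall_0,\dotsc,\forall_{w-1}$ by the higher-order quantifiers of Lemma~\ref{lem:quantifier-general}, and set membership $\{\sigma'\}x$ by application $x(z_{j_0},\dotsc,z_{j_{w-1}})$.

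\textbf{Syntax.} By assumption $\syntrans{\varphi}$ is tail recursive and has no free fixpoint variables. Its only free variables of interest are the lambda variables $x$ and $\overline{y}$, which the rules of Fig.~\ref{fig:phfl-tail} treat as $\emptyset \tail x$. Hence the operand $\syntrans{\varphi}$ (resp.\ $\lambda\overline{y}.\syntrans{\varphi}$) in $F(\cdot)$ satisfies $\emptyset\tail\cdot$, as the application rule requires. The second disjunct is a conjunction built from $\{\sigma'\}x$ and a universal block over $x\leftrightarrow\syntrans{\varphi}$. This block contains negations and boxes, but it is $F$-free, so it derives $\emptyset\tail\cdot$ via the negation and box rules together with Lemmas~\ref{lem:quantifier-sets} and~\ref{lem:quantifier-general}. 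The conjunction rule is therefore satisfied. The first disjunct gives $\{F\}$, so the body gives $\{F\}$, and the $\mu$ rule removes $F$. The outer application to $\myfalse$ (resp.\ $\myfalse_{(w),k}$) has a closed operand. Hence the whole formula is tail recursive with $\emptyset$ free fixpoint variables.

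\textbf{Semantics.} Fix good $\alpha$ and $\eta_\alpha$, and let $f$ be the operator from the definition of $\PFP$. The key claim is that, for a good set (resp.\ function) $g$, the semantics of $\lambda x.\syntrans{\varphi}$ at $g=\semtrans{M}{\Transsys}{i+1}$ equals $\semtrans{f(M)}{\Transsys}{i+1}$. This follows from \eqref{eq:invariant} for $\varphi$, applied with $\alpha[X\mapsto M, \overline{Y}\mapsto\overline{m}]$, using that $\syntrans{\varphi}$ is intersected with $\mathsf{e}$ or is otherwise good. I then unfold the least fixpoint as a Kleene iteration over the argument $x$. This is legitimate because $\mu F.\lambda x.\,F(\psi)\vee\chi$ is monotone in $F$ and its least fixpoint equals $\lambda x.\bigvee_{n}\chi[\psi^n(x)/x]$. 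Applied to $\myfalse$, it therefore denotes $\bigcup_n \chi(\semtrans{f^n(\emptyset)}{}{})$, where $\chi(g)$ is ``$g$ contains $\overline{z}$ and $g$ is stable under $f$''. By the quantifier lemmas, the stability test holds exactly when $f^n(\emptyset)=f^{n+1}(\emptyset)$. Thus the union is nonempty at $\overline{z}$ iff the sequence converges to some $P$ with $\overline{z}\in P$, which matches $\PFP$ in both the converging and the divergent (empty) case. Reading this off gives \eqref{eq:invariant} for the pfp formula.

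\textbf{Main obstacle.} The delicate step is showing that the iteration stays within good objects and that $\SEMTRANS$ commutes with $f$. Only then can the quantifiers, which range over good objects only, faithfully express stability, and the index-shuffling $\sigma'$ place $\overline{z}$ correctly. I would prove this by induction on $n$, using that $\myfalse$ and $\myfalse_{(w),k}$ are good and that the translated body is agnostic to the non-designated positions, as in the proof of Lemma~\ref{lem:quantifier-general}. The remaining steps are routine.
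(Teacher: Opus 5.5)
Your route is the paper's: check the tail-recursion rules one by one, unfold $\mu F.\lambda x.\,F(\psi)\vee\chi$ applied to $\myfalse$ into $\bigcup_n\chi(\psi^n(\myfalse))$, and match the stability test against the characterisation of $\PFP$. The unfolding itself is correct. Because the body is not monotone in $x$, it needs $x$ to have variance $\varboth$ read as ``arbitrary functions, pointwise order'', as in Ex.~\ref{ex:traceequiv}. Two of your steps, however, fail as written.

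\textbf{Tail recursion of the stability block.} This block is not $\emptyset\tail$-derivable ``via the negation and box rules''. The formula $\forall_i\psi$ is built from $\nu X.\,\psi\wedge\bigwedge_{a}\mubox{a}{i}X$. The box rule of Fig.~\ref{fig:phfl-tail} needs a fixpoint-closed body, but here the bound $X$ occurs free under the box (and, for several actions, in several conjuncts). Being $F$-free does not help. Lemmas~\ref{lem:quantifier-sets} and~\ref{lem:quantifier-general} make no tail-recursion claim, and for $i=1$ they do not even concern the quantifiers used. The paper asserts the same claim after Lemma~\ref{lem:ind-quant}. The repair is to use $\neg\exists_i\neg\psi$, which the negation rule derives, wherever $\forall_i$ occurs, including inside $\nxt^{(w)}$ and $\varphi^{(w)}_<$.

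\textbf{The semantic key claim.} You claim that $\lambda x.\syntrans{\varphi}$ sends $\semtrans{M}{\Transsys}{i+1}$ to $\semtrans{f(M)}{\Transsys}{i+1}$. This is false in general, and so is ``the iteration stays within good objects'', which you plan to prove by induction. Nothing in $\SYNTRANS$ intersects with $\mathsf{e}$ (only $\nxt^{(w)}$ does), and negation creates non-good tuples. For $\varphi=\neg X(Y_0,\dotsc,Y_{w-1})$ and $x\mapsto\emptyset$ the translation denotes $\States^D$, whereas $\semtrans{f(\emptyset)}{\Transsys}{2}=\States^{D-d}\times\{\state_0\}\times\dotsb\times\{\state_{d-1}\}$. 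So from the second iteration on, $x$ is bound to an object not of the form $\semtrans{M}{\Transsys}{i+1}$. Then \eqref{eq:invariant}, which only speaks about $\eta_\alpha$, cannot be invoked again.

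What does hold is agreement on good tuples: $\psi^n(\myfalse)$ and $\semtrans{f^n(\emptyset)}{\Transsys}{i+1}$ coincide inside $\eta(\mathsf{e})$ (at higher types: on good inputs, evaluated at good tuples). Propagating this along the iteration needs an extra locality property, best proved together with \eqref{eq:invariant} as a logical relation. It says that the value of a translated formula at good tuples depends only on the values of its free variables at good tuples and good inputs. This holds because every modality, quantifier and index substitution used fixes the last $d$ positions. Alternatively, for $i=1$, put $F(\mathsf{e}\wedge\syntrans{\varphi})$ into the construction. With this repaired, the rest of your argument goes through, since the membership test $\{\sigma'\}x$ and the stability test only inspect good tuples.
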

The proof is a straightforward application of invariant \eqref{eq:invariant} to the claim of the lemma, using
the above characterisation of partial fixpoints.

\subsection{The Capturing Result}

\begin{theorem}
\label{thm:capture}
Tail-recursive $\phfl{k+1}{}$ captures \bisim{k-\ExpSpace} for all $k \geq 0$.
\end{theorem}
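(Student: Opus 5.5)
The capturing statement splits into an upper and a lower bound, and I would assemble it from the results established in the preceding sections.

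\emph{Upper bound.} Fix a closed tail-recursive $\phfl{k+1}{d}$ formula $\varphi$ of ground type. By Prop.~\ref{prop:bisiminv}, the query it defines is bisimulation-invariant. By Thm.~\ref{thm:upper}, which reduces polyadic tail-recursive model checking to the monadic case via the $d$-product $\Transsys^d$ and then appeals to Prop.~\ref{prop:hflmodelcheck}, this query is decidable in $k$-\ExpSpace. The formula is fixed, so the blow-up from passing to $\Transsys^d$ is only polynomial in $|\Transsys|$. Hence the query lies in \bisim{k-\ExpSpace}.

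\emph{Lower bound.} Let $Q$ be a bisimulation-invariant $k$-\ExpSpace query of dimension $d$.
\begin{enumerate}
\item The formula $\varphi_<$ of \cite{Otto99} (built from $\varphi_\sim$) defines a total order on the bisimulation quotient. Since $Q$ is bisimulation-invariant, $Q$ can be evaluated on this quotient. So Prop.~\ref{prop:IV-gen} yields an $\hopfp{k+1}$ formula $\psi$ expressing $Q$, with the order given by the definable $<$.
\item By Lemma~\ref{lem:homo}, we may assume $\psi$ uses only the types $\tau'_{w,j}$ for a suitable $w\ge d$. We also rename variables so that the free first-order variables are $X_0,\dots,X_{d-1}$.
\item We restrict to the reachable part of $\Transsys$ from the input tuple, which is harmless by bisimulation invariance. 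We then apply $\SYNTRANS$ and prove by structural induction on $\psi$ that $\syntrans{\psi}$ is tail-recursive, fixpoint-closed, and satisfies invariant \eqref{eq:invariant}. The cases are covered as follows:
\begin{itemize}
\item atoms, Boolean connectives and first-order quantifiers by Lemma~\ref{lem:lower-basic};
\item higher-order quantifiers and applications by Lemma~\ref{lem:lower-higher};
\item partial fixpoints by Lemma~\ref{lem:trans-partial}.
\end{itemize}
Order bookkeeping: variables of HO-type $\tau'_{w,j+2}$ become PHFL variables of type $\tau_{w,j}$ of order $j$, so the $\mathit{partial}$ construction for $X$ of type $\tau'_{w,k+1}$ uses $F\colon\tau_{w,k-1}\to\tau_{w,k-1}$, which has order $k$. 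The quantifiers $\exists^{w,k}$ iterate via $F\colon \tau_{w,j}\to\tau_{w,j}$, and with $j\le k-1$ these stay within order $k$, well inside $\phfl{k+1}{}$. For $k=0$, the fixpoint $F\colon\grtype\to\grtype$ of order $1$ appears, which is where order $k+1$ is needed.
\item Finally, we close the formula. We replace $\mathsf{e}$ by a fixpoint-closed $\phfl{}{D}$ formula stating that positions $D-d,\dots,D-1$ are bisimilar to the original input positions, using $\varphi_\sim^{i,j}$. We then copy the input tuple into the last $d$ positions and its first-order variable slots by a substitution $\subst{\sigma}{}$, and universally quantify the remaining positions with $\forall_i$. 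This yields a $\phfl{k+1}{d}$-definable query, after projecting the dimension down from $D$ by padding, which agrees with $Q$ by \eqref{eq:invariant}.
\end{enumerate}

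\emph{Main obstacle.} The delicate step is the partial-fixpoint case together with the order and tail-recursion bookkeeping. I must verify that in $\mathit{partial}$ the recursive call $F(\cdot)$ stays in head position, its operand $\syntrans{\varphi}$ being fixpoint-closed, and that the stability test under $\neg$ or $\leftrightarrow$ is also fixpoint-closed. This holds since it mentions only $x$ and closed subformulas. I must also verify that the least fixpoint over the disjunctive iteration from $\myfalse$ really yields the partial fixpoint, or $\emptyset$ on divergence. My plan is to argue that $F$ unfolds to the disjunction over $n$ of the stability tests at $f^n(\emptyset)$. Stability can hold for at most one value, namely the eventual fixpoint, because a fixpoint absorbs all further iterates. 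Hence the least fixpoint equals $\mathit{tptr}$ of the partial fixpoint.
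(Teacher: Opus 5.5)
Your overall route is the paper's. The upper bound comes from Thm.~\ref{thm:upper} with Prop.~\ref{prop:bisiminv}. The lower bound goes through Prop.~\ref{prop:IV-gen}, Lemma~\ref{lem:homo}, an induction using Lemmas~\ref{lem:lower-basic}, \ref{lem:lower-higher} and~\ref{lem:trans-partial}, and finally discharging $\mathsf{e}$ and copying the input into the last $d$ positions. Your passage to the quotient and your analysis of $\mathit{partial}$ are sound and more explicit than the paper. Strictly, the induction for \eqref{eq:invariant} should be run on $\Transsys/_{\sim}$, since the enumerations only produce bisimulation-closed objects, and then transferred back via Prop.~\ref{prop:bisiminv}. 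Two of your steps, however, do not work as stated.

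\textbf{Discharging $\mathsf{e}$.} Any closed formula you put in place of $\mathsf{e}$ denotes one fixed subset of $\States^D$. Anything built from $\varphi^{i,j}_\sim$ can only compare two positions of the tuple at which it is evaluated. During evaluation the input survives only in positions $D-d,\dots,D-1$ themselves. Every position $0,\dots,D-d-1$, including the free-variable slots, gets overwritten at some point by $\exists_i$, $\forall_i$, $\sigma_{\to w}$, $\sigma^{<_i}$ or the stability test of $\mathit{partial}$. Worse, $\mathsf{e}$ is evaluated inside $\nxt^{(w)}$ exactly at tuples whose first $w$ entries form the candidate new member of the set. A conjunct such as $\bigwedge_{i<d}\varphi^{D-d+i,i}_\sim$ there would delete every candidate whose first $d$ entries are not bisimilar to the input, so $\exists^{(w)}$ would no longer range over all $w$-ary relations.

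The paper's $(\lambda e.\,\varphi')\,\subst{\sigma}{\mytrue}$ has the same defect. $\subst{\sigma}{\mytrue}$ denotes $\States^D$, and no closed formula can denote $\States^{D-d}\times\{\state_0\}\times\dotsb\times\{\state_{d-1}\}$, because that set depends on the input. The repair is to observe that no construct of $\SYNTRANS$ ever writes into positions $D-d,\dots,D-1$, so evaluation at a tuple only consults tuples with the same last $d$ entries. You can then take $\mathsf{e}:=\mytrue$, read ``good'' relative to the slice $\States^{D-d}\times\{(\state_0,\dots,\state_{d-1})\}$, and use $\subst{\sigma}{\big((\lambda e.\,\syntrans{\psi})\,\mytrue\big)}$. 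Here the copying substitution is outermost, which is where you (rightly, unlike the paper) place it.

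\textbf{Order bookkeeping.} Your claim that for $k\ge 1$ the translation stays within order $k$ cannot hold for the version of \hopfp{k+1} that makes Prop.~\ref{prop:IV-gen} true. Combined with Thm.~\ref{thm:upper}, it would place \bisim{$k$-\ExpSpace} inside \bisim{$(k-1)$-\ExpSpace}. That contradicts the space hierarchy theorem, which transfers to bisimulation-invariant queries via path-shaped LTSs.

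The slip is in which types the pfp variables may carry. At $k=0$ you correctly use pfp variables of type $\tau'_{w,2}$, one level above the quantified individuals, as in \FOPFP. The same convention is needed at every level, so in \hopfp{k+1} pfp variables reach type $\tau'_{w,k+2}$. Their $\mathit{partial}$ then uses $F\colon\tau_{w,k}\to\tau_{w,k}$, which has order exactly $k+1$. Your conclusion that the image lies in tail-recursive $\phfl{k+1}{}$ is still correct. But order $k+1$ is needed at every $k$, not only at $k=0$, and it comes precisely from the partial fixpoints.
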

\begin{proof}
The upper bound is shown in Thm.~\ref{thm:upper}. For the lower bound, it suffices, due to Prop.~\ref{prop:IV-gen}, 
to show that for any bisimulation-invariant query defined by an \hopfp{k+1} formula there is an equivalent,
tail-recursive 
formula in $\phfl{k+1}{}$. Let $\varphi$ be a formula 
defining such a query $Q$ of width $d \geq 1$, and w.l.o.g.\ $\varphi$ contains only the types $\tau_{w,0},\dotsc,\tau_{w,k}$ with $w \geq d$. Let $D = 2w + 2 + d$.

Let $\varphi' = \syntrans{\varphi}$ and let $\sigma$
be defined via $\sigma(2w+2+i) = i$ for $0 \leq i < w$, and $\sigma(j) = j$ for $j < 2w+2$. Note that $\sigma$ simply copies the first $d$ many
states in a tuple to the positions $2w+2,\dotsc,2w+2+d-1$ where they serve to retain goodness (cf.\ Def.~\ref{def:good}).
Then
$\psi = \big(\lambda (e\colon\grtype).\ \varphi'\big)\ \subst{\sigma}{\mytrue}$ defines the 
query $\{(\Transsys, (\state_0,\dotsc,\state_{D-1})) \mid (\Transsys, (\state_0,\dotsc,\state_{d-1})) \in Q\}$
by invariant \eqref{eq:invariant}.
Towards this, note that $\sem{\subst{\sigma}{\mytrue}}{\Transsys}{} = \States^{D-d} \times \{\state_0\} \times \dotsb \times \{\state_{d-1}\} =: M$.
Hence, $\sem{\psi}{\Transsys}{} = \sem{\varphi'}{\Transsys}{\eta}$ where $\eta = \emptyset[e\mapsto M]$.
\end{proof}

We observe that, technically, the resulting query has arity $D > d$. Since the result does not depend on the $D-d$ inputs beyond the first $d$,
this is morally a $d$-ary tuple (cf.\ also the discussion in \cite{Otto99}).

\section{Conclusion}
\label{sec:concl}

We have further completed the picture of the descriptive complexity of bisimulation-invariant
queries by adding characterisations for all levels of the exponential-space hierarchy, using fragments of PHFL. 
See Fig.~\ref{fig:bisimcomplexity} for a complete picture. Given the propensity of tail-recursive
programs for space-efficient evaluations, it is not surprising that the characterisation
of the bisimulation-invariant space hierarchy boils down to the tail-recursive fragment 
of PHFL. 

A natural follow-up target for bisimulation-invariant descriptive complexity are nondeterministic
classes, for example NP$/_\sim$. It is, of course, straightforward to obtain such a characterisation
as the image of existential second-order logic under the translation defined in this paper and 
\cite{DBLP:journals/corr/abs-2511-02596}, but clearly this does not yield a syntactically robust 
characterisation. Other characterisations are known \cite{DBLP:journals/iandc/Dawar98}, but
one that fits the framework of characterisations using fragments of PHFL still eludes discovery.
This is in stark contrast to the non-bisimulation-invariant setting, where the characterisation of NP was indeed
the first to be discovered \cite{Fagin74}. We remark that a syntactically robust characterisation of
bisimulation-invariant NP in terms of a fragment of \PHFL could be a contribution towards the resolution of the P-NP
problem, cf.\ \cite{DBLP:journals/corr/abs-2601-19641}.

\newpage
\bibliographystyle{eptcs}
\bibliography{literature}

\end{document}